\let\clineorig\cline  

\documentclass[acmsmall,nonacm]{acmart}
\usepackage{multirow}
\let\cline\clineorig

\usepackage{graphicx}%
\usepackage{amsmath}
\usepackage{amsthm}%
\usepackage{mathrsfs}%
\usepackage{xcolor}%
\usepackage{textcomp}%
\usepackage{manyfoot}%
\usepackage{booktabs}%
\usepackage{algorithm}%
\usepackage{algorithmicx}%
\usepackage{algpseudocode}%
\usepackage{listings}%
\usepackage{lstlangcoq}
\makeatletter
\AtBeginDocument{%
  \let\c@figure\c@lstlisting
  
  \let\ftype@lstlisting\ftype@figure 
}
\makeatother

\newtheorem{theorem}{Theorem}
\begin{document}

\title{Verification of a Generational Garbage Collector}


\author{Shengyi Wang}
\email{wangshengyi@sqz.ac.cn}
\orcid{0000-0002-2286-8703}
\affiliation{%
  \institution{Princeton University and Shanghai Qi Zhi Institute}
  \city{Shanghai}
  \country{China}
}

\author{Kathrin Stark}
\email{K.Stark@hw.ac.uk}
\orcid{0000-0002-7086-6518}
\affiliation{%
  \institution{Heriot-Watt University}
\city{Edinburgh}
   \country{Scotland}
}

\author{Andrew W. Appel}
\email{appel@princeton.edu}
\orcid{0000-0001-6009-0325}
\affiliation{%
  \institution{Princeton University}
 \city{Princeton}
 \state{NJ}
   \country{USA}
}

\begin{abstract}
We have formally verified in Rocq+VST a multi-generation 
collector with support for mutable references, written in C and
compatible with OCaml data types.  Its carefully specified API
supports C programs or CertiRocq (a verified compiler from Rocq to C)
is similar to that of OCaml's collector.
We have demonstrated the adequacy of 
our API specification for the mutator (client of the garbage
collector) by verifying client programs.
Our program and our verification
are modular so that
(1) the API spec is independent of the implementation
(e.g., the choice of copying vs. mark-and-sweep, generational-vs-nongenerational)
and (2) the specification and verification of components
of the implementation (such as the forwarding function)
are independent of other components (e.g., design decisions
regarding older generations, multiple threads,
or ``remembered sets'' of mutable references).
\end{abstract}




\maketitle

\section{Introduction}\label{sec1}

We have formally verified the correctness of a multi-generation copying garbage collector
with support for mutable references and updatable arrays, compatible with
OCaml data representations.  We have done so in a modular way, so that
some of the component functions \emph{and their verifications} could be repurposed
into other garbage-collector frameworks, such as the OCaml 4 or OCaml 5 garbage 
collectors.  Our Rocq proofs are available open-source (see \autoref{appendix:repo}).

To provide assurance that the formal specification of the collector is strong enough, 
we have furthermore verified clients of the collector (``mutator" programs);
and we have designed that specification to abstract
(hide from the client) unneeded information, such as the fact that the collector
is generational, or the details of its remembered set for mutable references.
Our interface (and its specification) can be used from the C source-language,
and reasoned about in the C semantics; it does not require reasoning at
the assembly-language level and does not require any special support from
the C compiler.  Thus it also supports proved-correct compilers from functional languages
to C, and we have demonstrated this \cite{korkut25:popl} with the CertiRocq verified compiler for Rocq \cite{paraskevopoulou21:certicoq}.

\paragraph{Background}
Programming languages that allow dynamic allocation of pointer data structures 
generally have a mechanism for reclaiming objects or records that are no longer needed.
Some languages, such as C, leave this task entirely to the programmer---which
makes APIs more complex, makes programming more difficult, and leads to bugs
and unsafety.  Other languages, such as Rust, have a type system that keeps track
of single versus shared ownership (multiple pointers to the same object) and allows
type-directed (and safe) deallocation; this can be highly efficient, but
there is still some programming and API burden (and restrictions on sharing patterns).
Finally, languages such as Lisp, Scheme, ML, Haskell, Java, or Python have
automatic garbage collection (g.c.): graph-traversal algorithms determine which
objects are no longer in use and reclaim them automatically.
Programmers in such languages don't need to keep track of shared versus single-owner
data structures, but pay a modest price both in the overhead of object descriptors
(so the g.c.\ can understand how to traverse the graph) and in the cost of 
executing the graph traversals.

Garbage collection has a long history with thousands of papers published since 1960
\cite{jones23}.  Some of the principles are,
\begin{itemize}
    \item \emph{Live} objects are those reachable from \emph{roots} which are the global and/or local variables of the program.
    \item The location of these roots must be described to the collector, with particular
    attention to which local variables are still in use.\footnote{A local variable is \emph{live} in the terminology of compiler optimization if its contents might be fetched in the future.
    This determination is Turing-complete but can be conservatively approximated by optimizing
    compilers.
    A g.c.-heap object is \emph{reachable} if there is a chain of pointers from a live
    root to the object.  One could, in principle, say that a g.c.-heap object is ``live"
    only if the program will actually traverse such a path in the future, but this
    determination is also Turing-complete and not easy to approximate, so in practice we say that g.c.-heap objects are \emph{live} if they are \emph{reachable from a live root}.}
    \item \emph{Mark and sweep} collection leaves live objects in place, and reclaims the 
    memory used by garbage objects by putting them into a data structure that
    can be used by the allocator.
    \item \emph{Copying} collection moves live objects out of the space they were in, the \emph{from-space},
    into a \emph{to-space}, leaving a large contiguous \emph{from-space} as a region of memory available for allocation.
    \item \emph{Generational} collection relies on the observations that (1) new objects, when created and initialized, necessarily point to older objects; and (2) older objects that
    have already survived garbage collection(s) will likely have a longer future lifetime than younger objects.
    Therefore, if mutable update (storing new
    pointer values into existing objects) is rare, then the mostly preserved invariant
    of newer objects pointing to older objects can be exploited to focus g.c.\ attention
    on just the smaller, newer object spaces.
    \item When mutable update is permitted, generational collection typically requires a \emph{remembered set} keeping track of object fields that may house pointers
    from older objects into younger ones.
    \item Generational collectors may be purely mark-and-sweep or purely copying, or may be organized with copying collection from nurseries to an older generation, and mark-and-sweep in the older generation.
    \item \emph{Shared-memory concurrent} programs are typically arranged so that each
    thread has its own nursery (which eliminates synchronization during object
    allocation) but older generations are shared between threads (requiring synchronization
    when starting garbage collections).
\end{itemize}

All of these principles were established decades ago, though there have been
significant new developments in, for example, the details of how shared-memory
concurrent programs should manage private nurseries with shared older generations \cite{ocaml5gc:icfp}.

In this paper we explain a simple multi-generation single-thread copying collector, 
we explain its invariants  and the specifications (preconditions/postconditions)
of each of its component functions;
and we discuss how those verified functions could be repurposed in other configurations
of a collector.

A multi-generation collector such as we have implemented is not necessarily
optimal.  ``Using multiple generations has a number of drawbacks"
\citep[\S 9.5]{jones23}.  
The state of the art seems to be hybrid 2-generation collectors
that use copying collection for the nursery and mark-sweep collection
for the older generation \cite{jones23}\cite{ocaml5gc:icfp}.
Therefore we have organized both our collector and its proof so
that many important components can be used in such collectors.

\paragraph{Contributions}
We build on previous work by \citet{wang19:oopsla}, who verified a multi-generation copying collector for a subset of the OCaml data formats.  In the current work,
\begin{itemize}
    \item \textbf{We add functionality:} mutable references and updateable arrays.
    \item \textbf{We add expressiveness:} with lemmas regarding
    the ability to allocate new objects, and in general, by strengthening the
    specification enough so that 
    mutator programs can actually be proved correct.
    \item \textbf{We fix bugs:}  The previous collector was verified correct in Rocq w.r.t.\ its specification.  But when \citet{korkut25:popl} verified clients of this collector, they found that this specification was too weak to prove the clients correct.  When we strengthened the specification, we found that
    the C program (the g.c.) did not satisfy the correct specification.  At least one
    of the bugs was a corner case unlikely to have been uncovered in testing.  We fixed the bugs.
    \item \textbf{We ensure modularity:} e.g., by proving the \lstinline{forward} function correct w.r.t.\ a specification
    usable in other g.c.\ designs.
    \item \textbf{We used AI to adapt the proofs:} The previous work, a proof of a multi-generation collector \emph{without} mutable references, needed significant adaptation of its internal invariants and proofs in Rocq.  We used OpenAI Codex to adjust invariants and adapt the Rocq proofs, as described in a separate technical report \cite{wang26:codex}.  However, the adjustments (in previous bullet points) to the client-side specifications were not done with AI.   
\end{itemize}
This collector was designed and built primarily as a g.c. for CertiRocq \cite{anand2017certicoq}, a verified compiler for Rocq.  Therefore, plug-compatibility with OCaml's
g.c. is not an explicit goal.  Our object formats are the same
as OCaml's (though we omit \emph{lazy/force}
tags and fiber-continuation objects); our mutator interface
is similar but not identical.  CertiRocq generates C code
(compilable with CompCert or with LLVM or gcc, though only the
CompCert path is fully verified); our collector's API is compatible
with both CertiRocq-generated code and handwritten C code for 
low-level libraries \cite{korkut25:popl}.

One might think that since Rocq's Gallina programming language is
purely functional, a Rocq program does not need mutation.  However,
there are several uses for mutation: Rocq's built-in persistent
arrays with a functional API but an imperative implementation \cite{coq-arrays2010}, Rocq's coinduction (if implemented
using lazy evaluation with thunk update), or
in functional programs that access mutation via a monadic interface
\cite{korkut24}.  Furthermore, our collector could also
be used for semifunctional languages such as OCaml or Standard ML
(which have explicit mutation) or for lazy languages such as Haskell
(which need mutation for thunk update).

\paragraph{Other verified collector+mutator combinations}

There are many papers describing pencil-and-paper proofs of g.c. \emph{algorithms} (starting with \citet{dijkstra68});
or mechanized proofs of \emph{abstract models} of g.c. programs but not proofs of the actual programs \cite{yang22:zgc}; 
or proofs of garbage collector programs without a demonstration
that the specification is strong and detailed enough to actually prove
mutators correct \cite{shamsu24}.

However, an important aspect of our work is \emph{a usable specification
interface}, with the ability to practically verify
correctness of mutator programs that are clients of our collector,
and the ability to practically verify other collectors that satisfy the same
specification interface.  By verifying the mutator and collector in the
same program logic, or at least in program logics compatible
with the same operational semantics, one can prove end-to-end
correctness of the whole collector+mutator system.

A few others have also verified across that interface, but in each
case there are significant limitations on both the API and on
the tricky combination of mutable references and generational collection:
\begin{itemize}
\item \citet{mccreight2007general} verified three nongenerational
collectors: mark-and-sweep, Cheney-style copying, Baker's concurrent
copying.  They also specified an interface to the mutator
and verified the correctness of a mutator (client) program.
The programs are written,
and the interfaces are specified, in the SCAP (Stack-based Certified Assembly Programming) language, which means that it may be difficult to port
to mutators not written in SCAP, or variations of the collector
not written in SCAP.  Baker's algorithm is \emph{concurrent}---so one
must reason about race conditions---and uses a \emph{read barrier}---that is,
dereferencing (loading) an object field requires more than just a simple
load instruction, so the compiler for the mutator must be rather specialized
in how it generates code.
\item
\citet{mccreight2010certified} verifed a nongenerational Cheney-style 
copying collector, with a specification of the mutator interface,
and verified the correctness of client programs.  Programs
are written in CompCert C.  Standard CompCert compiles through a series
of intermediate languages, each using a memory model with abstract
block numbers, and the semantics is insensitive to renumbering of blocks.
\citeauthor{mccreight2010certified}'s collector inserts a new
intermediate language \emph{GCminor} into this sequence, which takes
advantage of this insensitivity to make the garbage collector's
rearrangement of data transparent to the mutator, and to
proofs of correctness of the mutator.  However, (unlike the C language
in which our system is specified) there are no program logics
or semiautomated reasoning systems for GCminor, with which to prove client programs correct.
\item
\citet{sandberg19} verified a two-generation collector for CakeML,
including the verification of its mutator client, that is, ML
programs compiled to the ``StackLang'' intermediate representation.
But their work has some limitations:
They avoid having a general mechanism for
remembered sets, by segregating
mutable references into a region of the heap where they are never collected.
This is an idiosyncratic approach chosen for convenience in verification,
and it will not scale well to applications that allocate 
large data structures containing mutable references that eventually
will need to be reclaimed.  
They use a two-generation copying collector, which will not
perform well at scale compared either to a hybrid collector or
a multi-generation copying collector \citep[\S 9.5]{jones23}.
The specification of their collector-mutator interface is
tied so tightly to a low-level intermediate language of CakeML
that it is difficult to see how to use the collector in other 
contexts---however, there are mechanized correctness-proving
tools for CakeML programs \cite{Gueneau2017}.
\end{itemize}

\section{Interface between collector and mutator}\label{sec:interface}

In the terminology of garbage collection, the client of the collector (the program
doing useful work) is called the ``mutator."    The mutator builds and traverses
a directed graph of objects.  Each object is represented as $n+1$ words in memory,
with a \emph{header} (or equivalently \emph{descriptor}) and $n$ fields,
each of which may be either a pointer to another heap object, an integer, or
an ``outlier" (pointer to a non-heap object).  Graph nodes represented by integers or outliers have out-degree zero.
Graph nodes may have arbitrary in-degree; that is, there may be sharing or even
cycles in the graph.

We specify and prove our g.c. using the \emph{Verifiable C} program logic,
which is a higher-order separation logic mechanized in the 
Verified Software Toolchain \citep{appel14:plcc,DBLP:journals/jar/CaoBGDA18}, which is itself formalized and executed in Rocq (Coq).  Since Verifiable C is a separation logic,
it naturally accommodates descriptions of tree nodes whose child-pointers point
to disjoint data structures.  But graph structures with sharing are not as natural a fit for separation logic.  To accommodate a graph in VST,
we use the CertiGraph library \cite{wang19:oopsla} that represents an abstract
graph, with a separation-logic embedding as simply the iterated separating
conjunction (``big star'') of all the graph nodes.  

\autoref{subsec:API} presents the API.  First we
will explain the API informally, with pictures and diagrams
indicating objects, fields, pointers, and separated regions of the heap.

Our g.c.\ is designed to support an OCaml-like language and uses OCaml object formats. 
OCaml values can be pointers or \emph{unboxed integers}; to allow the garbage collector
to distinguish between those, (word-aligned) pointer values are represented with even numbers,
and unboxed integers are represented with odd numbers.  In the C program
(for the g.c., or for mutators written in C or code-generated into C), we use
the \lstinline{value} typedef:
\begin{lstlisting}[language=C,label=listing-value,caption=value type that can be either pointer or integer]
typedef void *value $\hspace{1em}$ __attribute((aligned(_Alignof(void *))));
\end{lstlisting}
That is, a \lstinline{value} is a pointer-to-void
(which is a C idiom for a pointer to an unknown type);
the \emph{attribute} is semantically transparent to the
C compiler but serves a hint to our program verifier (VST) that the value might hold an integer instead of a pointer (see \autoref{tag-bit-semantics}). 

\begin{figure}[h]
    \begin{minipage}{2.7in}
    \caption{The mutator has local and global variables that are \emph{roots} of
    the graph objects.  Each object has a 1-word header containing its
    \emph{length} (as well as a \emph{tag} field).  The graph may have
    \emph{outliers} that are not garbage collected but are managed
    by other means.   The one-field object at bottom right of the graph of objects (the dashed-line enclosed region) is unreachable, i.e., garbage.}
    \label{fig:gr}
    \end{minipage}~~~~~\begin{minipage}{2.1in}
    \includegraphics[scale=1]{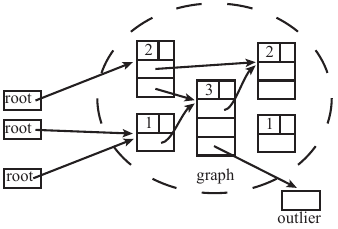}
    \end{minipage}
    \vspace*{-\baselineskip}
\end{figure}

An OCaml object has a 1-word \emph{descriptor}
containing, among other things, the length of the object in words (not counting
descriptor) and an 8-bit \emph{tag} that the mutator may use to distinguish data
constructors.  Each word after the descriptor is a \lstinline{value} unless the tag is $\ge 251$, in which case all
words are considered nonpointers regardless of their last bit.
The \emph{base address} of an object is the address of the first field \emph{after} the descriptor.
\citet[Chapter 23]{minsky22:rwo} provide details.

\begin{figure}[h]
    \hspace*{-.5in}\begin{minipage}{1.9in}
    \caption{The mutator is given a contiguous space for allocation of new
    objects, from address \textsf{alloc} to \textsf{limit}. 
    When an $n$-field object is allocated, the new object will
    be at address \linebreak $\mathsf{alloc}+1\cdot\textsc{word\_size}$;
    \hfill  one must
    \newline
    add $(1+n)\cdot\textsc{word\_size}$ to \textsf{alloc}, and initialize the fields.
    Here, the operation \textsf{root2=C(root1)} is illustrated,
    for a unary data constructor \textsf{C}.
    \newline{}\hspace*{1em}The \texttt{alloc} and \texttt{limit} are components of the
    thread-info; see \autoref{fig:heap}.}
    \label{fig:alloc1}
    \end{minipage}\hspace*{1em}\begin{minipage}{2.5in}
    \includegraphics[scale=1]{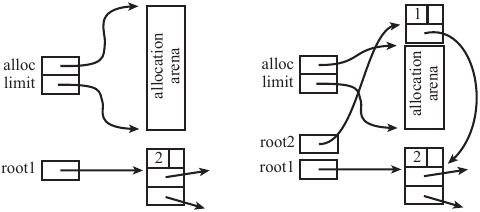}
    \end{minipage}
\end{figure}

\begin{figure}
    \hspace*{-.5in}\begin{minipage}{2.7in}
    \caption{The C program maintains a linked list of frame structs within
    its own function-call stack, using the clever trick \cite{mccreight2010certified} of declaring them as addressable
    local structure variables.  Addressable local structures is one
    of those crazy features of C that can lead, if abused, to 
    dangling pointers, so the mutator's correctness proof must show
    that this is managed well.  The g.c.\ doesn't care
    where this linked list is stored, as long as it is separate
    from the heap and graph.  Ordinary local variables such as $x$ that contain live roots
    must be copied into the frame-struct before calling \lstinline{garbage_collect()},
    or calling functions that might call \lstinline{garbage_collect}.
    The frame-pointer \texttt{fp} is a component of the
    thread-info; see \autoref{fig:heap}.}
    \label{fig:frames1}
    \end{minipage}\hspace*{1em}\begin{minipage}{1.3in}
    \includegraphics[scale=1]{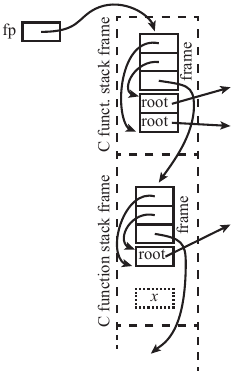}
    \end{minipage}
\end{figure}

The g.c.\ must be able to find the live roots in the mutator's stack and global variables.
Some garbage collectors do this by having the mutator's compiler produce precise
\emph{stack maps} \citep[Chapter 11]{jones23}; but our g.c.\ is meant for mutators
written in C, and C compilers do not produce such stack maps.  Therefore we
adapt a technique described by \citet{mccreight2010certified}---see
\autoref{fig:frames1}.
The mutator maintains a linked list of \emph{frames}, one for each (nominal)
stack frame of the C program and stored as a local addressable struct within
that stack frame.\footnote{We say \emph{nominal} because the C compiler might
choose to inline some function calls, in which case our linked list of root-frames
will not correspond 1-1 to C stack frames but will still be semantically
correct: if the optimizing compiler inlines function $f$ into $g$, the resulting
function's stack frame will have two root-frames in the linked list.}  

The mutator's C program need not keep
the current function's variables in the topmost frame-struct 
while operating upon them; it can use ordinary local variables that
the C compiler will keep in registers.  But 
before each function call, to another C function or to the collector,
the C program must store all local variables \emph{that might be live pointers}
into the topmost frame, and must fetch them back after the call.
The head of this linked list is the \emph{frame pointer} \lstinline[language=C]{fp}.

\begin{figure}
\begin{lstlisting}[language=C]
struct thread_info {
  value *alloc;  $\hspace{4.2em}$ /* beginning of allocation arena */
  value *limit;  $\hspace{4.2em}$ /* end of allocation arena */
  struct heap *heap; $\hspace{1.3em}$  /* heap-management data structure, opaque to mutator */
  struct stack_frame *fp;  /* linked list of root frames */
  unsigned int nalloc; $\hspace{.9em}$   /* number of words needed to allocate after collection */
};
\end{lstlisting}
    \begin{minipage}{3.1in}
    \caption{The mutator maintains a \textsf{thread\_info} structure
    in which it keeps its \textsf{alloc} and \textsf{limit} pointers,
    the \textsf{heap} management pointer,
    its stack-of-frames pointer \textsf{fp},
    and (when calling the g.c.) the number of words \textsf{nalloc} that
    it needs for the next-to-be-allocated object. \newline
    \hspace*{1em}     The \emph{heap management} data structure is an abstract data type.
    That is, the mutator keeps an opaque pointer to it and never traverses it directly.  In that data structure
    are all the free spaces of the older generations (but not the free
    space of the nursery, which is the \emph{allocation arena}),
    \emph{remembered sets} for mutable ref-updates, and the data structures
    to keep track of all the generations.}
    \label{fig:heap}
    \end{minipage}\hspace*{1em}\begin{minipage}{1.7in}
    ~~~\includegraphics[scale=1]{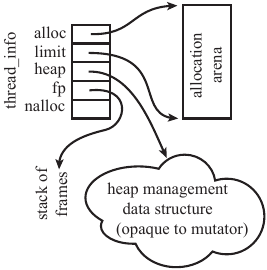}
    \end{minipage}
\end{figure}

The \emph{thread info} structure (\autoref{fig:heap}) allows the mutator and collector to communicate
about available space for allocation.

\begin{figure}
    \begin{minipage}{1.8in}
    \caption{Mutator's view of the g.c.\ interface, at the time of calling \lstinline{garbage_collect}.  The mutator is requesting that the collector should ensure at least 2 words in the allocation arena.
    As illustrated, there appear to be at least two words there already,
    and furthermore the mutator can determine this quickly by
    calculating \lstinline{limit-alloc},
    in which case this call would be unnecessary.\newline 
    \hspace*{1em}The \emph{graph} is quasi-opaque to the mutator's correctness proof: that is, one
    block at a time can be visible using \emph{ramification} lemmas.
    \newline
    \hspace*{1em}The \emph{outliers} region is not garbage-collected but is managed separately
    by the runtime system or foreign functions.}
    \label{fig:call-gc}
    \end{minipage}\hspace*{1em}\begin{minipage}{3.1in}
    \includegraphics[scale=1]{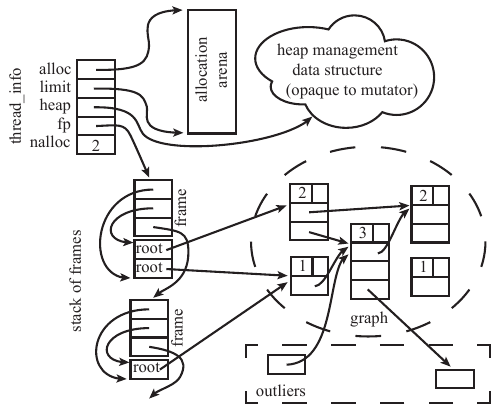}
    \end{minipage}
\end{figure}

\begin{figure}    
    \begin{minipage}{1.7in}
    \caption{Allocating a new record.  This is like \autoref{fig:alloc1} but
    in the larger context.  When not calling \lstinline{garbage_collect}, roots in the topmost frame are typically
    kept in registers (i.e., nonaddressable local variables of the C program),
    as is the thread-info pointer \texttt{ti}.  The mutator stores roots into the topmost frame just
    before calling the collector and fetches them back afterwards
    (because the copying collector may have moved the objects they
    point to).  Here, the mutator has just allocated a new object, by
    adjusting \lstinline{ti->alloc} (old value shown with dashed arrow)
    and making a local variable point to the object (dotted arrow).
    This indicates that two words of memory have left the \emph{allocation space}
    and joined the \emph{graph}.}
    \label{fig:alloc}
    \end{minipage}\hspace*{1em}\begin{minipage}{3.3in}
    \hspace*{-.1in}\includegraphics[scale=1]{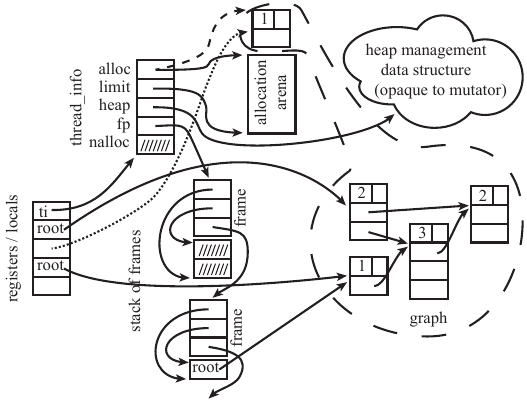}
    \end{minipage}
\end{figure}

All the memory words between \lstinline[language=C]{alloc} and \lstinline[language=C]{limit} are
available for allocation.\footnote{This is similar but not identical to OCaml's interface.}
The \lstinline[language=C]{heap} field points to the g.c.'s internal data structure
that keeps track of memory regions and is opaque to the mutator.
When calling the collector (\autoref{fig:call-gc}), 
the mutator sets the \emph{frame pointer} \lstinline[language=C]{fp} to the
stack-of-frames data structure and sets \lstinline[language=C]{nalloc} to the number of words
of free space needed to allocate the next object(s).

\paragraph{Updating a mutable reference}
A generational garbage collector is permitted to assume that
every pointer is from a younger object to an older object,
\emph{except as explicitly noted when updating a mutable reference
or array}.  The g.c.\ provides an \lstinline{mutable_update} mechanism
by which the mutator can inform the g.c.\ of ref-updates;
see \autoref{fig:update}.  This information is kept in a \emph{remembered set}
within the heap management data structure.

\begin{figure}    

    \begin{minipage}{2.0in}
        \caption{Updating a mutable reference.  The mutator has stored
    a new pointer value (bold arrow) into a field that previously
    pointed elsewhere (dashed arrow).  It must do so by calling
    the \lstinline{mutable_update} function, which (in addition to storing
    the new value as shown) adds the address to the \emph{remembered
    set} (part of the heap management data structure).  One might notice as well that \lstinline{limit} has
    changed (from the dashed arrow to the solid arrow); \lstinline{mutable_update}
    has stolen a word from the end of the allocation space to add
    to the remembered-set data structure.}
    \label{fig:update}
    \end{minipage}\hspace*{1em}\begin{minipage}{3.0in}
    \includegraphics[scale=1]{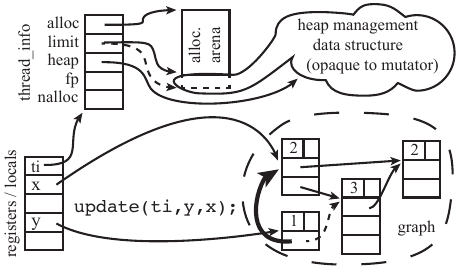}
    \end{minipage}
    \end{figure}

\paragraph{Dereferencing a pointer}
The \emph{graph} (depicted in \autoref{fig:gr} and \autoref{fig:call-gc}) is a quasi-opaque data structure.  That is, the mutator \emph{program} has pointers directly to objects within it, and can directly fetch fields of those objects; but reasoning
about these objects in separation logic is not so simple.
Separation logic is very natural for trees, where we know that
the left subtree and the right subtree occupy disjoint memory regions;
and we can use it to assert that the \lstinline{thread_info} struct
is disjoint from the allocation arena, both of which are disjoint from
the stack of frames, and the heap management data structure, and so on.
But the object graph is not a tree structure; when constructing
$C(x,y)$ for some constructor $C$, it may be that $x$ and $y$ are
the same pointer, or $x$ points within $y$ (there is a path from $y$ to $x$).
To reason about this, we need to \emph{focus} on one object within the
graph as shown in \autoref{fig:deref}.  See, for example, the
\textsc{Localize} rule of \citet[Equation 1]{wang19:oopsla}.

\newcommand{\wand}{\mathrel{-\hspace{-.7ex}*}}
\begin{figure}    
    \begin{minipage}{2.5in}
        \caption{Fetching a field of an object.
        The mutator can fetch field 0 or field 1 of pointer $x$.
        To reason about that in separation logic, the CertiGraph
        library provides \emph{ramification lemmas} for focusing on a single object.
        One such lemma proves that the original graph predicate $g$ 
        is equivalent to 
        $x\mapsto(2,p,q) \ast (x\mapsto(2,p,q) \wand g)$.\newline
        \hspace*{1em}In this illustration, $t$ is the constructor tag.  The mutator can fetch $x[-1]$ and bitwise-and with binary 000...00011111111 to get the
        tag, which it may need for discriminating on data constructors.
        }
    \label{fig:deref}
    \end{minipage}\hspace*{1em}\begin{minipage}{2.5in}
    \hspace*{2em}\includegraphics[scale=1]{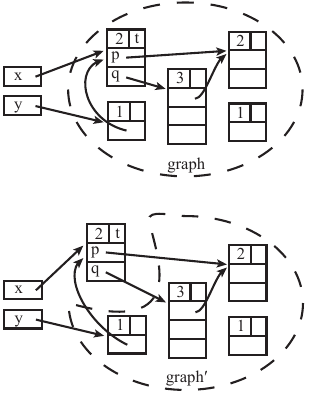}
    \end{minipage}
    \end{figure}

\begin{figure}    
    \begin{minipage}{2.9in}
    \caption{Outlier data structures can point into the graph, but they
    must be registered using \lstinline{mutable_update}.     \newline \hspace*{1em}    
    A function closure (such as the ``2'' object
    in this graph) has a \emph{code pointer} to the start address
    of a machine-language function and an \emph{environment} $e$,
    a value typically represented by graph nodes, representing
    the free variables of the function.  The code pointer is 
    typically not a graph node itself, but needs to be accounted
    for in the separation-logic predicates; we call it an
    \emph{outlier.}  \newline
    \hspace*{1em} Another use for outliers is for opaque data structures managed by a foreign
    function (i.e., a C function in the runtime system, callable
    from the functional program that uses OCaml data types).
    The functional program cannot dereference such pointers,
    but can pass them as arguments back to the foreign function. 
}
    \label{fig:outliers}
    \end{minipage}\hspace*{1em}\begin{minipage}{2.1in}
    \hspace*{1em}\includegraphics[scale=1]{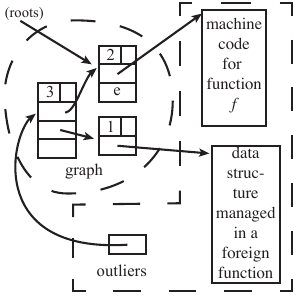}
    \end{minipage}
\end{figure}

\paragraph{Outliers}
Function closures, such as the one in \autoref{fig:outliers},
point into the compiled program, to addresses that may not be
convenient to represent as graph nodes, i.e., addresses outside the heap.\footnote{Unless,
of course, your ML compiler \emph{does} put all machine code into
movable graph objects \cite{appel90:runtime}.}
In the CompCert C semantics, pointer values
may be in an \emph{allocated} state or a
\emph{freed} state.  It is illegal to do any operation
on a freed pointer value, \emph{even address comparison}
\cite{leroy14:mem}.  The garbage collector \emph{does} need
to do address comparisons on function-pointer values,
since any pointer field within a graph object may need
to be tested to see whether it is within the bounds of
the from-space (see \autoref{sec:cheney}).
Therefore, in VST's program logic
(proved sound w.r.t. the CompCert semantics),
to do address comparisons on function pointers,
we need to reason explicitly about the validity
of such pointers.  For that reason, our correctness specification
and proofs maintain an \emph{outlier} set of known-valid pointers.

Another use for outliers is to implement pointers to
objects managed by the runtime system (or by foreign functions).
The mutator cannot directly deference such pointers
(since the focusing-by-ramification lemma applies
only to objects in the graph), but fields of graph objects
can contain such pointers, and the mutator can pass them
to foreign function calls which will know how to traverse
those objects.

Such foreign data structures might even contain pointers
back into the graph.  The foreign function can register
such roots by calling \lstinline{mutable_update}
(see \autoref{subsec:API}).

\subsection{API operations}\label{subsec:API}
In summary, the API for the mutator must support
the following operations:

\begin{description} 
    \item[$\bullet$ \texttt{available(ti,n)}] Test whether there is enough space in the nursery to allocate
    a new object with $n$ fields.
    This is simply \lstinline{(ti->limit)-(ti->alloc) $\ge$ n+1}.
    \item[$\bullet$ \texttt{alloc(ti,$d$,$v_1$,...,$v_{n-1})$}] With precondition that enough space is available, 
    allocate a new object with
    descriptor $d$, with $n$ pointer fields or $n$ words of nonpointer data (depending
    on the descriptor's tag),
    and initialize all the pointer fields with given data.
    This is simple: \lstinline[language=C]{p=ti->alloc+1} is the address of the new object,
    \lstinline[language=C]{ti->alloc+=n+1} allocates the space, 
    \lstinline[language=C]{p[-1]=d} initializes the descriptor word,
    and, for each $i$, \lstinline[language=C]{p[$i$]=$v_i$} initializes the fields.
    \item[$\bullet$ \texttt{dereference(p,i)}]  Given a pointer $p$ to an 
    $n$-field object in the graph and a number $i<n$, fetch the
    $i$th field of the object into a local variable.
    \item[$\bullet$ \texttt{garbage\_collect(ti)}] Invoke the garbage collector to ensure that the allocation space contains least
    \lstinline[language=C]{ti->nalloc}, by reclaiming
    dead objects and/or asking the operating system for more virtual memory.
    \item[$\bullet$ \texttt{mutable\_update(ti,p,x)}] Store value $x$ into
    a field  at address $p$ within an existing object $b$.
    This not only performs \lstinline[language=C]{*p=x;} but also
    (if $x$ is a pointer)
    adds address $p$ to the \emph{remembered set} (see \autoref{fig:update}).  The field offset $(p-b)$ need
    not be a constant; the mutator could compute it, i.e.,
    subscript an array.  Thus, this operation handles 
    \emph{updateable arrays} as well as \emph{mutable references}.
    The function does not need to know the address $b$, only
    the field address.
\end{description}

During all these operations, the mutator has access to the general thread information structure, 
but the heap management data structures and the graph's internal representation remain opaque to it.

\section{Formal specification of the collector}\label{sec:spec}

The mutator builds and traverses a directed graph of object pointers.
The garbage collector's job is to delete unreachable objects and 
rearrange the reachable ones so that the free space is contiguous 
(to allow fast allocation).  The mutator is not supposed to care about
the exact addresses of objects, only the structure of the graph.
Therefore, if the collector removes unreachable objects from the graph,
and moves reachable objects to new addresses while leaving a graph
isomorphic to the reachable subgraph of the original graph,
the mutator should not notice the difference.

We have told this story in pictures.  Now we 
will express those pictures as resource predicates in separation logic
for the different API operations.
Each specification has a precondition \lstinline{PRE} with
a \lstinline{PROP}ositional component,
a \lstinline{PARAM}eter-value component,
a \lstinline{GLOBALs} linkage,
and a spatial component that is
a \lstinline{SEP}arating conjunction of resources.
The \lstinline{POST}condition has
a \lstinline{PROP}ositional component,
a \lstinline{RETURN}-value specifier,
and a spatial \lstinline{(SEP)} component.

\subsection{Garbage collection}
\autoref{list:gcspec} gives an abridged and partially expanded
presentation of the VST \emph{funspec}, or function specification, for
the \lstinline{garbage_collect} function.  In the mechanized
development, the spatial resources shown explicitly here are packaged
in the auxiliary predicate \lstinline{gc_sep}.  We expand them in this
listing so that their roles can be explained individually.

\begin{lstlisting}[language=Coq,numbers=left,numberstyle=\tiny,float,caption={Function spec. of garbage\_collect},label=list:gcspec]
DECLARE _garbage_collect
WITH rsh: share, sh: share, gv: globals, ti: val, g: LGraph, h: heap, 
       alloc: val, avail: Z, heap_p: val, frames: list frame, nalloc: Ptrofs.int,
       roots : roots_t, outlier: outlier_t
PRE [tptr thread_info_type]     $\hspace{2em}$  (* struct thread_info *ti *)
   PROP (readable_share rsh; writable_share sh;
         full_gc g h (frames2rootpairs frames) roots outlier)
   PARAMS (ti) GLOBALS (gv)
   SEP (data_at sh thread_info_type
          (alloc,  (alloc + WORD_SIZE * avail, 
                     (heap_p, (ti_fp frames, Vptrofs nalloc))))
          ti;
        allocation_space alloc avail;
        graph_rep g;
        frames_rep sh frames;
        heap_management h heap_p;
        outlier_rep outlier;
        mem_mgr gv; all_string_constants rsh gv)
POST [tvoid]     $\hspace{4em}$ (* this is a void-returning function *)
   EX g': LGraph, EX frames': list frame, EX h': heap, EX roots': roots_t,
   EX alloc': val, EX avail': Z,
   PROP (full_gc g' h' (frames2rootpairs frames') roots' outlier;
          garbage_collect_relation roots roots' g g';
          frame_shells_eq frames frames';
          Ptrofs.unsigned nalloc <= avail')
   RETURN ()
   SEP (data_at sh thread_info_type
          (alloc',  (alloc' + WORD_SIZE * avail', 
                       (heap_p, (ti_fp frames', Vundef))))
          ti;
        allocation_space alloc' avail';
        graph_rep g';
        frames_rep sh frames';
        heap_management h' heap_p;
        outlier_rep outlier;
        mem_mgr gv; all_string_constants rsh gv).
\end{lstlisting}

\paragraph{The SEP component of the garbage\_collect precondition}~\\
\begin{minipage}{4.2in}
The precondition's first spatial conjunct (\autoref{list:gcspec} line 9) is
a separation-logic ``points-to,'' written in VST as
\lstinline{data_at sh thread_info_type $v$ ti},
where in this case $v$ is a 5-tuple because
\lstinline[language=C]{struct thread_info} has 5 fields.
This \lstinline{data_at} describes the picture
at right 
(compare with upper left of \autoref{fig:call-gc}).
Value \lstinline{ti} is a pointer to a \lstinline[language=C]{struct thread_info} 
with those five field values.
\end{minipage}
\quad
\begin{minipage}{1.3in}
\begin{tabular}{ r | l | }
\cline{2-2} 
ti~$\longrightarrow$ & alloc \\ \cline{2-2}
& limit \\  \cline{2-2}
& heap\_p \\  \cline{2-2}
& fp \\  \cline{2-2}
& nalloc \\ \cline{2-2}
\end{tabular}
\end{minipage}\linebreak
\noindent  The value 
\lstinline{alloc} is the address
of the \emph{allocation space} shown in \autoref{fig:call-gc},
\lstinline{alloc+WORD_SIZE*avail} is the \emph{limit},
\lstinline{heap_p} is the address of the heap-management
data structure, and so on.

The values \lstinline{alloc}, \lstinline{avail}, etc., are 
quantified in lines 2--4; that is, whenever the
\lstinline{garbage_collect} function is called,
(the proof of) the caller must exhibit some values of those quantified variables that
will make all the precondition predicates satisfiable.
In lines 2-4, the \lstinline{val} type is a CompCert value
(typically a pointer value), \lstinline{Z} is Rocq's integer type,
\lstinline{Ptrofs.int} is the type of integers modulo $2^k$
(if integers are $k$ bits and $k=8\cdot \textsc{word\_size}$).

The conjunct \lstinline{alloc_space alloc avail} describes another
part of \autoref{fig:call-gc}, the region in which the mutator may allocate new records:

\begin{tabular}{ r | l | }
\multicolumn{2}{}~\\
\cline{2-2} 
alloc~$\longrightarrow$ & \quad\quad \\
~ & ~ \\
~ & ~ \\
~ & ~ \\ \cline{2-2}
\multicolumn{1}{r}{$\mathrm{alloc}+\mathrm{WORD\_SIZE}\cdot{}\mathrm{avail} ~ \longrightarrow$ } & 
 \multicolumn{1}{l}{} \\
\end{tabular}

\vspace\baselineskip
The conjunct \lstinline{frames_rep sh frames} (fig. ~\ref{list:gcspec} line 15) describes the
linked-list-of-frames data structure, with \lstinline{ti_fp(frames)}
being the start address of the list.

On line 18, \lstinline{heap_management h heap_p} is
the heap management data structure, rooted at address \lstinline{heap_p},
whose contents are described by the \lstinline{h} value.
This conjunct does \emph{not} describe the actual objects of the heap,
just the management data structures.  The structure of the
\lstinline{heap} type need not be known by the mutator
(or by its proof of correctness).

The \lstinline{graph_rep} conjunct (line 14) describes all the
\emph{graph} objects shown in \autoref{fig:call-gc}.
Unlike the other conjuncts, it does not have a single
root address (such as \lstinline{ti}, \lstinline{alloc}, or
\lstinline{ti_fp(frames)}).  Instead, the
labeled-graph type (\lstinline{LGraph})
describes an address for each vertex of graph \lstinline{g}.
To characterize this formally, we describe a directed graph structure in
Rocq using the CertiGraph library \cite{wang19:oopsla} (github.com/CertiGraph/CertiGraph).
This library can describe the embedding into VST separation logic of vertices, edges, vertex labels, edge labels, and so on.  

Objects outside the garbage-collected heap, that may point into the heap
or may be targets of pointers within the heap, we call \emph{outliers}.
That's a separate region of memory characterized by the predicate
\lstinline{outlier_rep}.

The last two \textsc{sep}arating conjuncts (line 18) give the g.c.\ access
to the malloc/free memory manager (in case it wants to create a new
generation, for example) and the read-only string literals
of the program (in case it wants to print an error message).

The current mechanized definition of \lstinline{gc_sep} additionally
contains \lstinline{heap_remset_rep} and \lstinline{remset_rep}, which
represent the collector's concrete and logical remembered-set state.
These internal bookkeeping resources are omitted from
\autoref{list:gcspec}; their role in the correctness argument is
discussed in
\autoref{subsec:remembered-sets-recorded-back-pointers}.

\paragraph{The PROP component of the garbage\_collect precondition}
The \textsc{prop}ositional component of the precondition states that the quantified
values of the WITH clause must satisfy these propositions:
the graph \lstinline{g} must be compatible with the heap \lstinline{h},
the stack of \lstinline{frames}, the \lstinline{roots} within
those frames, and the \lstinline{outliers}.
In addition, the permission-share \lstinline{rsh} (used for
accessing string literals) gives at least \emph{read} permission,
and the share \lstinline{sh} gives the \lstinline{frames}
data structure at least \emph{write} permission.

\paragraph{The garbage-collect postcondition}

The garbage collector produces a new graph \lstinline{g'}
with new \lstinline{roots'} that live within an
adjusted \lstinline{frames'}.
It provides a fresh allocation space beginning at address
\lstinline{alloc'} containing \lstinline{avail'} words;
and it updates its heap-management data structure to abstract
value \lstinline{h'}.
The postcondition starts with existential quantifiers for
all of these values; in the \lstinline{PROP} clause it
asserts that all these values are
compatible with each other;
and in the \lstinline{SEP} clause it asserts that
all these values are represented in memory.

In addition, there are two \lstinline{PROP} clauses
relating the old graph to the new graph (\lstinline{garbage_collect_relation})
and relating the old stack of frames to the new one (\lstinline{frame_shells_eq});
see also \autoref{fig:frame-shells}.  The last \lstinline{PROP}
clause guarantees that the allocation space is big enough
to satisfy the mutator's request.

\begin{figure}    
    \begin{minipage}{2.5in}
        \caption{The copying garbage collector may move any or all of the
        root objects $x,y,z$ to new addresses $x',y',z'$.  But the addresses
        of all the frames $(f_1,f_2,r_1,r_2,r_3)$ in the stack-of-frames must stay the same.
        This relation is expressed by the predicate
        \lstinline{frame_shells_eq}. \newline
        ~\newline
        The function \lstinline{frames2rootpairs} takes an abstract
        \lstinline{list(frame)} value such as this, and extracts
        to the list of address/value pairs $[(r_1,x);(r_2,y);(r_3,z)]$.       
        }
    \label{fig:frame-shells}
    \end{minipage}~~\begin{minipage}{2in}
    \hspace*{2em}\includegraphics[scale=1]{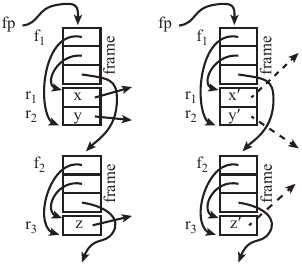}
    \end{minipage}
    \end{figure}

We have proved several properties of \lstinline{garbage_collect_relation $r_1$ $r_2$ $g_1$ $g_2$}.
In particular,  under appropriate conditions (that are satisfied
immediately after a completed collection), there is a graph isomorphism
between the reachable subset of graph $g_1$ with roots $r_1$
and the reachable subset of $g_2$ with roots $r_2$ (\lstinline{garbage_collect_isomorphism}).\footnote{You might think that, after a collection, graph $g_2$ is the same as its reachable subset;
but that is not true of a generational collector, which is not 
obligated to collect all the garbage at every collection.}

\subsection{Proving correctness of mutator operations}\label{subsec:proving-mutator}

\paragraph{Correctness of allocating an object}

\begin{lstlisting}[language=Coq,numbers=left,numberstyle=\tiny,float,caption={Schema of the function specification of \texttt{alloc}. The predicate
\texttt{in\_graph} is typically type-directed and therefore the
values $v_1,\ldots,v_n$ have types $X_1,\ldots,X_n$, respectively.
The notation \texttt{app d [v1, ..., vn]} is an abuse of notation: it
represents the application of a typed constructor, described by
the descriptor \texttt{d}, to a list of arguments. See section~4 of
\citep{korkut25:popl} for further details.},label=list:allocspec]
DECLARE _alloc_spec
    WITH rsh : share, sh: share, gv : globals, ti : val, g : LGraph, 
         roots : roots_t, sh : share, outlier : outlier_t, t_info : thread_info
         p1 ... pn : rep_type, d : cRep, v1 ... vn : X1... Xn
    PRE  [tptr thread_info_type; val; int_or_ptr_type; ...; int_or_ptr_type)
       PROP (readable_share rsh; writable_share sh; 
             full_gc g c t_info roots outlier;
             in_graph g outlier v1 p1, ..., in_graph g outlier vn pn;
             add_node_compatible g [v1...vn];
             n < headroom t_info; 
             mem_mgr gv; all_string_constants rsh gv
             )
       (PARAMS (ti, rep_type_val p1, ..., rep_type_val pn))
       (GLOBALS [gv])
       (SEP (gc_sep g t_info roots outlier ti sh gv))
    POST [ int_or_ptr_type ]
      EX (p' : rep_type) (g' : graph) (t_info' : thread_info),
        PROP (in_graph g' outlier (app d [v1, ..., vn]) p';
              ti_frames t_info = ti_frames t_info';
              gc_graph_iso g roots g' roots;
              headroom t_info' = headroom t_info - Z.of_nat (S n))
        RETURN  (rep_type_val g' p')
        SEP (gc_sep g' t_info' roots outlier ti sh gv).
\end{lstlisting}

Recall that \lstinline{alloc} allocates and initializes a new object with descriptor $d$ and supplied values $v_1,\ldots,v_n$. Operationally, it chooses the address of the new object, advances the allocation pointer by the required number of words, writes the descriptor word, and initializes the object fields. However, a specification that described only this low-level memory update would be too weak to compose with the garbage collector and with later operations on graph-structured values. The actual specification is shown in Figure~\ref{list:allocspec}.

The precondition contains the same assumptions as the garbage-collector specification. In particular, the spatial part describes the thread-info structure, the linked list of stack frames, heap-management data, and the concrete representation of the graph. In the mechanized development, these spatial resources are packaged in
the predicate \lstinline{gc_sep}.  The schematic specification above
suppresses its remembered-set parameters, which are unchanged by this
allocation operation.

There are two additional requirements specific to allocation. First, the allocation area must contain enough free space. 
(We discuss the workflow for checking this condition before calling \lstinline{alloc} in the next subsection.)
For a thread-info structure \lstinline{t_info}, the available space is computed as the distance between the current allocation pointer and the allocation limit; we write this  as \lstinline{headroom t_info} and require this headroom to be larger than $n$.

Second, the specification must describe the values and their location passed as fields of the new object.
Such values may be immediate data, represented by \lstinline{repZ z}; outlier pointers, represented by \lstinline{repOut p}; or pointers to nodes in the graph, represented by \lstinline{repNode v}.
These alternatives are captured by the type \lstinline{rep_type} which can be transformed to the C \lstinline|value| type using the function \lstinline|rep_type_val|.

As the mutator should not know about the exact graph layout in memory, the propositional part of the precondition contains predicates of the form
\lstinline{in_graph g outlier v_i p_i}
that should entail at least how the values are encoded in the graph. 
Usually it is stronger, and in a type-directed manner might add additional propositional information to simplify later reasoning, 
such as the existence of subvalues in the graph. 
The predicate \lstinline{in_graph} is intentionally parameterized, since different clients may use different type-specific representation predicates. Nevertheless, it must satisfy a small of structural properties, as being stable under graph extension and graph isomorphism. 

The postcondition produces a new graph $g'$ and a new thread-info structure \lstinline{t_info'}. Since allocation does not perform garbage collection, the root set and linked list of stack frames are unchanged. Moreover, $g'$ is isomorphic to the old graph $g$ on the previous roots. 

Additionally, the graph contains (at position $p'$) the new object \lstinline|app d [v1,...,vn]| whose descriptor is $d$, with edges from this fresh node to the graph values among $v_1,\ldots,v_n$.
These minimal conditions are fulfilled by the 
``constructor representation predicate``, \lstinline{graph_cRep g p (boxed d n) [p1 ... pn]}
that represents (in this example) 
a boxed constructor (a record pointer,
not an unboxed integer) with tag $d$ and length n,
whose out-edges are labeled with $p1 ... pn$.
To preserve well-formedness of the graph and heap, the newly introduced edges must originate only from the freshly allocated node; their targets must already be valid graph nodes or permitted outliers; and no duplicate edge should be introduced. 
This is ensured by the \lstinline{add_node_compatible} predicate.
These side conditions are satisfied naturally when allocating constructor nodes.

Finally, the postcondition records the effect on the allocation area. The new thread-info structure has the same stack-frame list as before, but its available nursery space has decreased by $n+1$ words: one word for the descriptor and $n$ words for the fields. Although this fact could be recovered by invoking the availability predicate again, storing the updated value of \lstinline{headroom} directly in the postcondition makes subsequent allocation proofs more convenient.

\paragraph{Ensuring enough space for allocation}

To ensure that there is enough space to allocate one or more objects whose size (including descriptors) totals $n$, 
the mutator calls the \lstinline{available} test before allocating (\autoref{list:ensuring}).
\begin{lstlisting}[numbers=left,float,numberstyle=\tiny,caption={Ensuring space to allocate an object},label=list:ensuring,language=C]
if (!available(ti,n)) {
   store root pointers into the topmost frame;
   ti->nalloc=n;
   garbage_collect(ti);
   fetch root pointers back from the topmost frame;
}
\end{lstlisting}

We separate the \lstinline{available} test
from the \lstinline{garbage_collect} call for two reasons. First, it avoids the cost
of storing and fetching root pointers in the common case that no collection is needed.
Second, if an extended basic block (tree of control flow) will predictably allocate
several records whose total size is $\le n$, multiple \lstinline{available} tests can be coalesced
into a single one.

To allocate a new object, the mutator will now always first
ensure
there's enough space.
If it needs to call the garbage collector, it simply passes the \lstinline{ti} pointer.
To \emph{prove} correctness of this call, the mutator proof
must satisfy the precondition of \lstinline{garbage_collect};
that is, demonstrate values
\lstinline{rsh, sh, gv, ti, g, h, alloc,} etc. such that all
the \textsc{prop}ositions in the precondition hold and 
all the \textsc{sep}arating conjuncts (thread-info struct,
allocation space, graph, etc.) exist in memory.

The postcondition of the if-statement in \autoref{list:ensuring}, whether or not the 
\emph{then} branch is taken, is that 
that $\mathsf{limit}-\mathsf{alloc}\ge n$.  
This allows the \lstinline{allocate} operations
illustrated in \autoref{fig:alloc} to be 
proved correct.

\paragraph{Correctness of dereferencing a pointer}
Consider a pointer \lstinline{p} to an object with $n$ fields in the graph and an
index $i<n$. The function \lstinline{dereference(p,i)} retrieves the $i$-th field
of the object into a local variable. Operationally, this corresponds to the C
statement \lstinline{y = x[$i$];}, where \lstinline{x} is the memory address of the
object in the graph. Throughout this section we use the example shown in
\autoref{fig:deref}, with $i=0$.

Since \lstinline{dereference} only reads the graph and never modifies it, its
specification (Listing~\ref{list:dereferencespec}) requires substantially weaker
assumptions about the garbage collector layout than for example allocation. 
The only spatial requirement is that the graph is
represented in memory. Again, the
existence of the particular node being dereferenced are expressed in the
propositional part of the assertion.

Again, the particular node will be described in the propositional part, 
here via the \lstinline|in_graph| predicate. 
In case that (as in \autoref{fig:deref}) $v$ describes a constructor with two arguments, 
recall that this predicate would imply the 
``constructor representation predicate``, \lstinline|graph_cRep g vx (boxed t 2) [p;q]|
that represents (in this example) 
a boxed constructor  with tag $t$ and length 2,
and out-edges that are labeled with $p$ and $q$.
Additionally, this predicate typically implies propositional predicates describing the 
values at $p$ and $q$, added to the postcondition.


In order for the VST separation Hoare logic to dereference a pointer $x$,
there must be a \lstinline{sep} clause of the form
$x\mapsto(v_0,v_1,\ldots)$, written in VST as 
\lstinline{data_at $\pi$ $\tau$ [$v_0;v_1;\ldots$] $x$},
with $\pi$ a permission-share and $\tau$ a C struct or array type.
We employ CertiGraph's \emph{focusing} rule to transform
the precondition into:

\begin{lstlisting}
let $A$:= data_at graph_rep g (Tarray int_or_ptr_type 2) [rep_type_val g p; rep_type_val g q])
in PROP(graph_cRep g vx (boxed t 2) [p;q])
   ...
   SEP($A$; $~$  $A$ -* graph_rep g)
\end{lstlisting}
This is just as illustrated in \autoref{fig:deref}, and the conjunct $A$
is sufficient to fetch \lstinline{x[0]} or \lstinline{x[1]}.

\begin{lstlisting}[float,caption={Funspec for \textsf{dereference}},label=list:dereferencespec]
DECLARE _dereference2 
  WITH g : graph, outlier : outlier_t,
       v: X, vx: rep_type
  PRE [int_or_ptr_type]
  PROP (in_graph g outlier (app v.descr v.proj1 v.proj2) vx)
   PARAMS (rep_type_val g vx)
   GLOBALS (gv)
   SEP (graph_rep g)
  POST [tptr int_or_ptr_type]
  EX  (p q: rep_type) (sh: share),
  PROP (in_graph g outlier x.proj1 p; in_graph g outlier x.proj2 q; writable_share sh)
  RETURN  (rep_type_val g p) 
  SEP (let A := data_at sh (Tarray int_or_ptr_type 2)
                    [rep_type_val g p; rep_type_val g q] (rep_type_val g vx)
    in A * A -* graph_rep g). 
\end{lstlisting}


After the fetch, one can simply use wand adjointness to transform 
\linebreak $A\ast(A \wand \mathsf{graph\_rep ~g})$
back into $\mathsf{graph\_rep ~g}$.

\paragraph{Correctness of updating a mutable reference}

\autoref{list:updatespec}
shows the function specification for 
\lstinline{mutable_update(ti,p,x)}, which sets
a mutable reference (or array slot) at address \lstinline{p} to value \lstinline{x}.
It works on similar principles as those discussed above.

\begin{lstlisting}[float,caption={Funspec for \textsf{mutable\_update}},label=list:updatespec]
Definition int_mutable_update_spec :=
  DECLARE _mutable_update
    WITH ti: val, v: exterior_t, t_info: thread_info, sh: share, g: LGraph,
         it: interior_t, outlier: outlier_t, rh: remset_heap
  PRE [tptr thread_info_type, tptr int_or_ptr_type, int_or_ptr_type]
  PROP (writable_share sh;
        1 <= headroom t_info;
        graph_heap_compatible g (ti_heap t_info).(pt_heap);
        remset_heap_and_heap_compatible rh (ti_heap t_info).(pt_heap);
        mutable_location_compatible g it;
        exterior_compatible g outlier v)
    PARAMS (ti; interior_address it g; exterior2val g v)
    GLOBALS ()
    SEP (graph_rep g;
         outlier_rep outlier;
         before_gc_thread_info_rep sh t_info ti;
         heap_remset_rep g (ti_heap t_info).(pt_heap) rh)
  POST [tvoid]
    EX g': LGraph, EX t_info': thread_info, EX rh': remset_heap,
    PROP (mutable_graph_update g it v g';
          t_info' = decr_info_nursery t_info (exterior2val g v);
          rh' = mtb_upd_remset_heap (exterior2val g v) (RemSetInterior it) rh)
    RETURN ()
    SEP (graph_rep g';
         outlier_rep outlier;
         before_gc_thread_info_rep sh t_info' ti;
         heap_remset_rep g' (ti_heap t_info').(pt_heap) rh').
\end{lstlisting}

\subsection{Adequacy claim}

In \autoref{sec:generational} and \autoref{sec:gc-correct} we will prove that the garbage collector
(with all the operations listed at the beginning of \autoref{subsec:API})
implements its specification (including the funspecs in \autoref{list:gcspec}
and \ref{list:updatespec}, along with supporting lemmas mentioned
in \autoref{subsec:proving-mutator}).  But 
whenever one proves that a program module implements an API specification,
one must wonder whether it is really the right specification.
The most reliable way to demonstrate adequacy of a specification is to use that specification
to prove correctness of clients of that API.
\vspace\baselineskip

\begin{theorem}The g.c. specification is adequate for
proving correctness of mutators that allocate and traverse
data structures.
\end{theorem}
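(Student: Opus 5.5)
This "theorem" is an adequacy claim, so I will establish it by demonstration rather than by a single deductive argument. The plan is to exhibit concrete mutator programs written against the API of \autoref{subsec:API}, and verify each in Verifiable C using only the funspecs of \autoref{list:gcspec}, \autoref{list:allocspec}, \autoref{list:dereferencespec} and \autoref{list:updatespec}. No proof may unfold the definitions of \lstinline{heap_management}, the remembered set, or the generational structure. Adequacy then means two things: every client proof goes through while treating \lstinline{gc_sep} and \lstinline{graph_rep} as opaque, and the postconditions are strong enough to state functional correctness. Functional correctness here means the returned \lstinline{rep_type} is related by \lstinline{in_graph} to the intended Rocq value.

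I would use three kinds of clients:
\begin{enumerate}
\item A list builder, which allocates constructor nodes repeatedly, with \lstinline{available} checks and occasional \lstinline{garbage_collect} calls.
\item A traversal, such as list length or sum, which dereferences repeatedly.
\item A client that creates a mutable ref cell, updates it with \lstinline{mutable_update}, then collects and reads the cell back.
\end{enumerate}
The CertiRocq-generated code of \cite{korkut25:popl} serves as a larger instance of the same pattern.

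The key steps, in order, are as follows.
\begin{enumerate}
\item Fix a type-directed \lstinline{in_graph} predicate for the client's Rocq types. For lists this is an inductive relation: \lstinline{nil} is an unboxed integer, and \lstinline{cons x l} is a node satisfying \lstinline{graph_cRep}, whose out-edges satisfy \lstinline{in_graph} for $x$ and $l$.
\item Prove the two structural lemmas the paper says \lstinline{in_graph} must satisfy: stability under graph extension, as produced by \lstinline{alloc}, and stability under \lstinline{gc_graph_iso}, the isomorphism from \lstinline{garbage_collect_isomorphism} restricted to the reachable subgraph.
\item Verify the ensure-space idiom of \autoref{list:ensuring}. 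Before the call, spill the roots into the top frame, establishing \lstinline{frames2rootpairs}. After the call, use \lstinline{frame_shells_eq} to show the frame addresses are unchanged, so the roots can be reloaded. Transport every client \lstinline{in_graph} fact along the isomorphism to the new roots. The postcondition \lstinline{nalloc <= avail'} then yields the headroom needed by \lstinline{alloc}.
\item Verify loops using invariants of the form ``$\exists g,\mathit{roots},t$. \lstinline{gc_sep} $\ast$ \lstinline{in_graph} facts'', using \lstinline{headroom} arithmetic for allocation and focusing and wand adjointness for \lstinline{dereference}.
\item For mutation, prove that \lstinline{mutable_graph_update} preserves \lstinline{in_graph} for values not sharing the updated cell. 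Also prove that the remembered-set update keeps \lstinline{full_gc} true, so that a later collection remains callable.
\end{enumerate}

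I expect step 3, the transport across a collection, to be the main obstacle. The collector's relation is stated for the reachable subgraph from the roots in the frames. But client facts may mention values held only in registers, or subvalues reachable from roots. Since the collector may leave garbage, $g'$ is not literally the reachable part of $g$, and every live value must be shown to be reachable from a spilled root. My first attempt would be to require \lstinline{in_graph} facts only about root values. I would then derive a lemma that \lstinline{in_graph} is determined by the reachable subgraph, so it commutes with \lstinline{gc_graph_iso} by induction on the value. If that lemma does not follow, I would strengthen \lstinline{garbage_collect_relation} to expose a vertex map preserving labels and edges on reachable vertices, and prove transport through that map.
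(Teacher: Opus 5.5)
Your strategy is the one the paper relies on: establish adequacy by verifying clients that use only the published funspecs, keeping \lstinline{gc_sep} and \lstinline{graph_rep} opaque. The difference is that the paper builds no new clients. Its proof points to the completed CertiRocq/VeriFFI verifications of \citet{korkut25:popl}: \lstinline{uint63_to_nat} (allocate, spill roots, collect, reload) and the conversion of a character list into a packed bytestring.

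Your version gives a self-contained argument. It also names the client-side lemmas the interface must support: stability of \lstinline{in_graph} under extension and under \lstinline{gc_graph_iso}, reloading via \lstinline{frame_shells_eq}, and headroom from \lstinline{nalloc <= avail'}. The citation instead gives finished, machine-checked evidence on clients that genuinely stress the interface.

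As a test of adequacy, your suite is weaker in two ways:
\begin{itemize}
\item No client of yours dereferences while a record is only partially initialized. Your step-4 invariant keeps the whole collector bundle intact, so it cannot even describe that state. Yet the bytestring client is exactly what forced \lstinline{dereference} to have a precondition weaker than \lstinline{full_gc}.
\item None of your clients builds no-scan objects. The bytestring client is also what exposed the too-weak no-scan specification of \citet{wang19:oopsla}. Lists of tagged values and ref cells could never reveal that defect.
\end{itemize}

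Two smaller points.
\begin{itemize}
\item Your fallback of strengthening \lstinline{garbage_collect_relation} would demonstrate adequacy of a different specification from the one stated. It is also unnecessary: \lstinline{gc_graph_iso} already supplies explicit maps \lstinline{vmap12} and \lstinline{emap12}, a label-preserving isomorphism of the reachable subgraphs, and \lstinline{roots2 = map (exterior_map vmap12) roots1}.
\item In step 5, re-proving that the remembered-set update restores the collector's invariants would require unfolding exactly what you declared opaque. Use the exported lemma \lstinline{mutable_update_gc_ready} instead. Mutation lies outside this theorem's claim anyway.
\end{itemize}
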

\begin{proof}
Our garbage collector has been used
in the runtime system of CertiRocq, a compiler from Rocq to C that
uses OCaml data representations.  The Verified Foreign Function Interface
(VeriFFI) for CertiRocq \citep[sections 9 and 11]{korkut25:popl} supports proofs of correctness
of client programs that exercise the garbage collector.
\end{proof}

Section 9.2 of \citeauthor{korkut25:popl} summarizes the
Rocq/VST proof of a function \lstinline{uint63_to_nat} that
converts an unboxed 63-bit unsigned integer to a Peano natural number
represented with data constructors S and O (see \autoref{list:uint63-to-nat}).  Lines 5--12 implement the pattern shown in \autoref{list:ensuring}. 
Because the local variable \lstinline{temp} is live after the 
call to \lstinline{garbage_collect}, and it may contain a heap pointer,
it must be stored into the stack of frames (at line 6) and
fetched back afterwards (at line 11).
Lines 7,8,11 implement pushing
and popping from the stack of frames; note that both \lstinline{roots}
and \lstinline{fr} are addressable local variables.
The function \lstinline{alloc_make_nat_S}
is an instance of the \lstinline{alloc} API operation, specialized
to a 1-field record; a precondition of this function is
that at least 2 words of space remain between
\lstinline{alloc} and \lstinline{limit}; this is ensured
by the \lstinline[language=C]{if} statement.

\begin{lstlisting}[language=C,float,numbers=left,numberstyle=\tiny,label=list:uint63-to-nat,caption={A function that allocates; see Figure 2 of \citet{korkut25:popl}}]
value uint63_to_nat (struct thread_info *tinfo, value t) {
  uint64 i = ((uint64)t)>>(uint64)1; /* strip off the tag */
  value temp = make_nat_O(); /* create the base case */
  while (i) {
    if (available(tinfo, 2)) { /* test whether we need to call g.c. */
      value roots[1]={temp}; /* register the root-pointer temp */
      struct stack_frame fr = {roots+1,roots,tinfo->fp};
      tinfo->fp= &fr;
      tinfo->nalloc = 2; /* state the need for 2 words */
      garbage_collect(tinfo);
      temp=roots[0]; tinfo->fp=fr.prev; $~~$ /* pop the frame stack */
    }
    temp = alloc_make_nat_S(tinfo, temp); /* apply S constructor to temp */
    i--;
  }
return temp;
}
\end{lstlisting}

In the function \lstinline{uint63_to_nat},
at the time one decides to allocate a new record (at line 5),
all the values to be used in initializing fields of that record
are already in local variables (in this case, \lstinline{temp}).
This is typical of how programs allocate.
But Section 11 of \citeauthor{korkut25:popl} describes a more
challenging case: converting a list of characters into a packed
bytestring.  In this case, since the length of the string
is dynamic without an \emph{a priori} bound, the values to be
stored cannot all be in local variables at once; during the loop that
initializes the fields, the mutator must also traverse
(dereference) data structures in the graph.  During that loop,
the  \lstinline{full_gc} invariant does not hold, since there is already partially allocated
record.  Therefore, the precondition
(and postcondition) of the \lstinline{dereference} operation
must be (and is) weaker than \lstinline{full_gc}.  

\section{Garbage collector implementation}\label{sec:gcimpl}

We have implemented (and verified) a multi-generation copying collector
inspired by the one that \citet{reppy93:gc} built for Standard ML.
Because we want the specifications and proofs of \emph{components}
of the collector to be independent of the precise details of
how the generations work, we will first describe some abstract
principles.

\begin{lstlisting}[float=b,caption={Rocq code describing the functional model of a space, and a heap as a squence of spaces.},label=list:heap]
Record space : Type := Build_space {
    space_start: val;
    used_space: Z;
    available_space: Z;
    total_space: Z;
    space_sh: share;
    used_leq_available: 0 $\le$ used_space $\le$ available_space;
    available_leq_total: available_space $\le$ total_space;
    space_upper_bound: total_space <= MAX_SPACE_SIZE; }.
    
Record heap : Type := Build_heap {
    spaces : list space;  
    spaces_size : Zlength spaces = MAX_SPACES }.
\end{lstlisting}

\begin{lstlisting}[language=C,float,caption={The C structs characterizing a heap. 
 Not shown are the separation logic predicates \lstinline{space_rep} and \lstinline{heap_rep} that relate these to \lstinline{space} and 
 \lstinline{heap} in the obvious way: \lstinline{used_space} between
 \lstinline{start} and \lstinline{next}, \lstinline{available_space} between \lstinline{start} and \lstinline{limit}, \lstinline{total_space} between \lstinline{start} and \lstinline{rem_limit}.},label=list:struct-heap]
struct space { value *start, *next, *limit, *rem_limit; };
struct heap {  struct space spaces[MAX_SPACES]; };
\end{lstlisting}

\begin{lstlisting}[float,caption={~},label=list:graph-heap-compatible]
Definition graph_heap_compatible (g: LGraph) (h: heap): Prop :=
  let $N_g$ := length g.(glabel).(g_gen)  in 
  (* each active generation is compatible with the graph *)
  Forall (generation_space_compatible g)
         (combine (combine (nat_inc_list $N_g$) g.(glabel).(g_gen)) h.(spaces)) /\
  (* each inactive generation's space-pointer is NULL *)
  Forall (eq nullval) (skipn $N_g$ (map space_start h.(spaces))) /\
  (* There are not more active generations than the length of the array *)
  $N_g$ <= length h.(spaces).
\end{lstlisting}

The \emph{heap} is a sequence of $N_h$ \lstinline{space}s
(see \autoref{list:heap}).  Each space
is a contiguous region of memory
from \lstinline{space_start} to \lstinline{space_start+total_space}.
In each space, the portion up to \lstinline{space_start+used_space}
contains graph objects;
the portion from there to \lstinline{space_start+available_space}
is available for allocating
new objects;
and the remainder up to \lstinline{space_start+total_space}
is used for other purposes.  (The \lstinline{space_sh} is a \emph{permission share},
and other invariants will generally enforce that \lstinline{space_sh} gives
the g.c. at least read+write permission on its heap spaces).

For example, in a simple Cheney-style copying collector there
are two spaces ($N_h=2$).  In each space, \lstinline{available_space} is
equal to \lstinline{total_space}.
The mutator's \lstinline{alloc} is set
to \lstinline{space[0].space_start} and its \lstinline{limit}
is set to \lstinline{space[0].space_start+available_space}.
The mutator allocates in space 0 until it reaches its \lstinline{limit}.
Then the g.c.\ copies (\lstinline{forwards}) all reachable objects into space 1,
and the spaces swap roles (meaning that the mutator's 
\lstinline{alloc} is set to \lstinline{space[1].space_start} before
resuming).

In our collector, there are $N_h$ generations,
each twice as large as the last.

In either of these arrangements, every graph object must live in the
\emph{used} part of one of the spaces.
Each graph vertex is labeled with its space-number
in the range $[0,N_g)$.  Spaces in the range $[N_g,H_h)$ must be marked
as inactive, \lstinline{space_start=nullval}.

The predicate \lstinline{graph_heap_compatible} characterizes these properties
(\autoref{list:graph-heap-compatible}).

\begin{figure}[!tp]
    \includegraphics[scale=0.9]{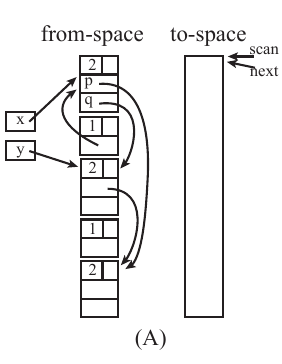}\hspace*{-3em}
    \includegraphics[scale=0.9]{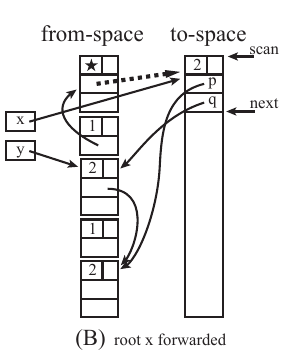}\hspace*{-3em}
    \includegraphics[scale=0.9]{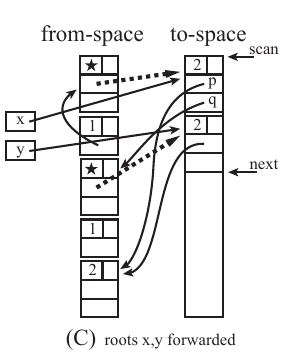}\linebreak
    \includegraphics[scale=0.9]{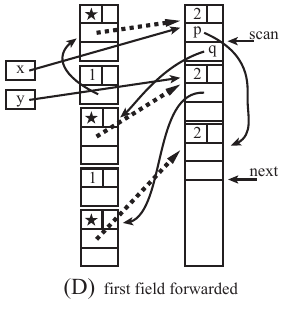}\hspace*{-3em}
    \includegraphics[scale=0.9]{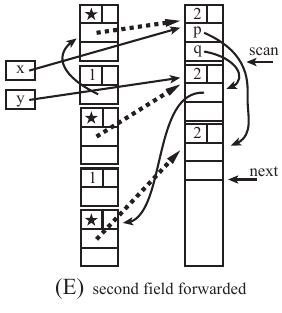}\hspace*{-3em}
    \includegraphics[scale=0.9]{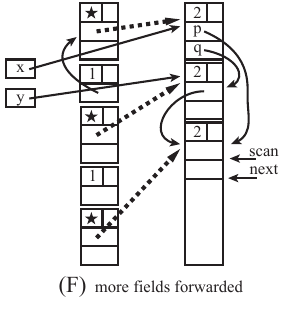}\linebreak
    \includegraphics[scale=0.9]{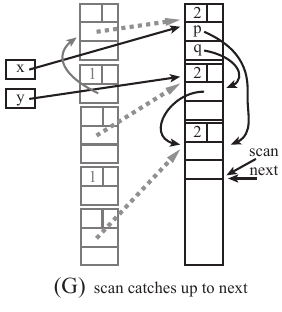}
    \hspace*{4em}
    \begin{minipage}[b]{2.3in}
    \vspace*{\baselineskip}
    \caption{Cheney's algorithm.  At the start, \textbf{scan} and
    \textbf{next} point at the beginning of the to-space.
    First, the roots are forwarded, meaning that the contents of 
    records they point to are copied verbatim and a
    \emph{forwarding pointer} (dotted line) is installed.
    The $\star$ indicates records that have been forwarded.
    Then, each field between \lstinline{scan} and \lstinline{next} is forwarded.
    Eventually \lstinline{scan} catches up with \lstinline{next}, and the algorithm
    is finished; the entire from-space may now be re-used
    for other purposes.}
    \label{fig:cheney}
    \vspace*{\baselineskip}
     \end{minipage}
\end{figure}

\section{Cheney's algorithm}\label{sec:cheney}
At the heart of a generational copying collector is an
algorithm for a copying a \emph{from-space} to a \emph{to-space}---see 
\citet[Chapter 4]{jones23}.  For that we use
Cheney's algorithm \citep[Algorithms 4.2+4.3]{jones23},
with a variation for local depth-first traversal \citep[Section 4.2]{jones23}
to improve cache locality.  \autoref{fig:cheney} illustrates
Cheney's algorithm.

Garbage collectors traverse object graphs, by depth-first search or breadth-first search or by other means.
Depth-first search requires an auxiliary stack; breadth-first
search requires an auxiliary queue.  One can avoid a separate stack by using
\emph{pointer reversal} within the graph, but this has high overhead.  The beauty
of Cheney's breadth-first algorithm is that the space between \lstinline{scan}
and \lstinline{next} \emph{is} the queue, so no auxiliary structure is required.
But the disadvantage of pure breadth-first copying is that, if $x$ points to $y$,
$x$ and $y$ are unlikely to be adjacent after copying.  If they are in different
\label{semi-depth-first}
cache blocks, then memory locality will suffer.  In contrast, depth-first search
improves cache locality in traversing data structures.  A reasonable compromise
is to use mostly breadth-first copying with \emph{local depth-first copying},
which uses an auxiliary stack of bounded depth.  Lines 21-24 of
\autoref{list:forward} implement this.  We have proved it correct, but we have
not done extensive measurements of its performance.

\subsection{The \lstinline{forward} Function}

The \lstinline{forward} function performs the core operation of
Cheney’s algorithm. Its implementation is shown
in \autoref{list:forward}.
\begin{description}
    \item[Line 2:] \lstinline{next} is a pointer-to-pointer so that
    it can be advanced as illustrated in \autoref{fig:cheney}.
    \item[Line 4:] To test whether the last bit is 0 or 1, indicating
    pointer or nonpointer (respectively), we must use
    the function \lstinline{is_ptr}, which in turn uses
    \lstinline{test_int_or_ptr}; see \autoref{tag-bit-semantics}
    for an explanation of that and of \lstinline{int_or_ptr_to_ptr} and
    \lstinline{ptr_to_int_or_ptr}.
    \item[Line 6:] The \lstinline{Is_from} function needs to be
    specially axiomatized; see \autoref{sec:rangetest}.
    \item[Line 7:] \lstinline{Hd_val} indexes \lstinline{v[-1]} to 
fetch the object descriptor, and at line 11 \lstinline{Wosize_hd} extracts the size field from the descriptor.
    \item[Line 8:] The test \lstinline{hd==0} detects the
     $\star$ mark illustrated in \autoref{fig:cheney}, and
     line 18 places the $\star$ mark.
     \item[Line 19] places the \includegraphics{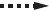}
     forwarding pointer.
    \item[Line 21:] If \lstinline{depth>0} this performs a local depth-first
    traversal (to improve locality of reference).  The default setting
    \lstinline{depth=0} disables this feature, but a value of
    $\sim \!\! 10$ would also be reasonable.
\end{description}

\begin{lstlisting}[language=C,numbers=left,numberstyle=\tiny,float,caption={The \textsf{forward} function},label=list:forward]
void forward (value *from_start,  value *from_limit,  
	      value **next, value *p, int depth) {
  value *v,  va = *p;
  if (is_ptr(va)) {  $~~$ /* $\mbox{see \autoref{tag-bit-semantics}}$ */
    v = (value*)int_or_ptr_to_ptr(va);
    if (Is_from(from_start, from_limit, v)) { $~~$ /* $\mbox{see \autoref{sec:rangetest}}$ */
      header_t hd = Hd_val(v);
      if (hd == 0) { 
        *p = Field(v,0); /* already forwarded, follow forwarding pointer */
      } else {  /* copy to to-space and install forwarding pointer */
        int i, sz=Wosize_hd(hd);
        value *newv = *next+1;
        *next = newv+sz;
        Hd_val(newv) = hd;
        for(i = 0; i < sz; i++) {
          Field(newv, i) = Field(v, i);
        }
        Hd_val(v) = 0;
	Field(v, 0) = ptr_to_int_or_ptr((void *)newv);
	*p = ptr_to_int_or_ptr((void *)newv);
        if (depth>0)
  	 if (!No_scan(Tag_hd(hd)))
            for (i=0; i<sz; i++)
              forward(from_start, from_limit, next, &Field(newv,i), depth-1);
} } } }
\end{lstlisting}
\subsection{Exterior and Interior Pointers}
\label{sec:ext-int}

\begin{figure}
    \begin{minipage}{2in}
    \caption{Does this graph have 5 vertices, or 10?  That is, should the boxes labeled ``root'' or ``outlier'' be considered as graph vertices, or should they bey nonvertices whose contents are addresses of vertices?  The dashed oval labeled ``graph'' hints at our design choice: root boxes and outliers are not graph vertices.}
    \label{fig:interior}
    \end{minipage}\hspace*{1em}\begin{minipage}{3in}
    \includegraphics[scale=1]{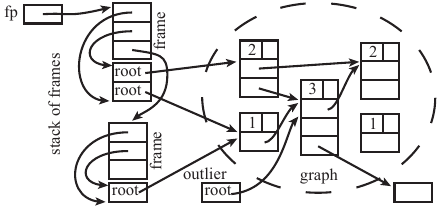}
    \end{minipage}
\end{figure}

The \lstinline{forward} function's implementation
works regardless of how the mutator's root pointers
are organized, regardless of how many generations
the collector has (or whether it's generational at all),
and regardless of how ``remembered sets'' are organized
for mutable references.  Therefore we seek 
a formal specification of \lstinline{forward} 
(that is, a funspec) that's independent of these issues
as well.

As we have discussed, the mutator manages a set of root pointers.  When it calls the collector, the mutator must indicate the locations where it stores those roots.  We use a
stack-of-frames data structure (\autoref{fig:interior}), but the same considerations
would apply regardless of the particular data structure used.
In the specification of the interface between mutator and collector, should we consider that each ``root'' box in the
stack of frames is a graph node (with out-degree 1),
or should we say that they are outside the graph, pointing in?
Each of these design choices has its disadvantages:
\begin{itemize}
    \item If root-boxes are graph vertices, then whenever the mutator copies a root-pointer, its correctness proof needs to create a new graph vertex; and when the root is no 
    longer active, destroy the vertex.  This is quite a bit of bookkeeping.  In such a model, the ``managed g.c. heap'' contains a subset of the ``graph,'' and there are parts of the graph that the g.c. does not manage. 
    \item If root-boxes are not vertices, then our specification must distinguish between \emph{interior} pointers (from one vertex to another) and \emph{exterior} pointers (from outside the graph to graph vertices).  In this model, the ``managed g.c. heap'' is corresponds 1-1 to the graph + allocable space.  \emph{Our specification is based on this principle.}
\end{itemize}
In order to ease the bookkeeping burden for those who prove the correctness of mutators (clients of our g.c.), we have chosen the second way: root-boxes are not vertices,
and we have two kinds of pointers, interior and exterior. 

Sometimes the same C code must be understood in two
different ways depending on whether a pointer is
interior or exterior.  For example, in line 3 of \autoref{list:forward}, the variable \lstinline{va}
receives the \lstinline{value} stored at the memory location pointed
to by \lstinline{p}. 
To prove correctness of this line of code, there must be
a spatial maps-to predicate in the precondition,
giving access to the resource \lstinline{p}.
If \lstinline{p} is an interior pointer (i.e., a field
of a record in the graph \lstinline{g}), we can derive
that resource from the \lstinline{graph_rep g} predicate.
If \lstinline{p} is exterior, we can derive it from
the set of roots---ultimately, from the \lstinline{frames_rep}
predicate that describes the stack-of-frames.  
But the formal specification of
\lstinline{forward} should not need to know the
details of how the stack-of-frames is organized,
since it is passed the address of one root (or 
one field of a graph record) at a time.

Thus, in our specification of the \lstinline{forward} function
we describe the pointer \lstinline{p} with two cases:
\begin{lstlisting}
Definition forward_p_type :=  
    (* exterior *) (Z + GC_Pointer + VType) 
  + (* interior *) VType*Z
\end{lstlisting}
The first case
indicates the three possible content types of exterior pointers
(unboxed integer, outlier pointer, or graph-vertex) and
the second case indicates an interior pointer as the $i$th
field of graph-vertex $v$. Then we need
predicate \lstinline{forward_p_compatible} (not shown) to indicate
compatibility of $p$ with the graph, in either of those cases.

The separation logic representation-predicate \lstinline{forward_p_rep} reflects this
distinction: it represents a one-word resource for exterior pointers,
and is \lstinline{emp} for interior ones, whose memory is already
covered by \lstinline{graph_rep}.
\begin{lstlisting}
Definition forward_p_rep (sh: share) (p: forward_p_type) addr g :=
  match p, addr with
  | FwdPntIntr _ , _ => emp
  | FwdPntExtr extr , Some v => data_at sh int_or_ptr_type (exterior2val g extr) v
  | _ , _ => emp (* this case cannot occur *)
  end.
\end{lstlisting}

\subsection{Specification of the forward function}
The \lstinline{forward} function can be used in a wide variety
of garbage collectors.  In fact, the OCaml g.c.\ has a very similar
function that was almost unchanged\footnote{Three changes: (1) In the OCaml5
collector, checking the mark on an object (as at lines 7-8 of \autoref{list:forward}) must be a
compare-and-swap synchronized operation, to ensure
that two threads do not both copy the object; 
and this must be properly synchronized with reading the forwarding
pointer (as at line 9).  (2) OCaml5 has a new ``continuation'' tag, 
related to stacks of other threads.  (3) The depth-first
``to-do list'' of nodes remaining to forward is now in
a per-thread state structure rather than in a global variable.  Our forward function does not have a 
to-do list in the style of OCaml 4 or OCaml 5.}
in a radical overhaul of the collector
between OCaml 4 and OCaml 5 \cite{ocaml5gc:icfp}.  Furthermore,
though both OCaml 4 and OCaml 5 have 2-generation collectors
with the older generation using mark-and-sweep, almost
the same \lstinline{forward} function can be used in a multi-generation
collection in which all generations use copying collection.

\begin{lstlisting}[float,caption={Funspec of the \lstinline{forward} function},label=list:forward-spec]
DECLARE _forward
WITH rsh: share, sh: share, gv: globals,  g: LGraph, h: heap, hp: val, 
     outlier: outlier_t, from: nat, to: nat, depth: Z, forward_p: forward_p_type,
     fwd_addr: forward_addr_type
PRE [tptr int_or_ptr_type,    $\hspace{2em}$  (* value *from_start *)
     tptr int_or_ptr_type,    $\hspace{2.3em}$  (* value *from_limit *)
     tptr (tptr int_or_ptr_type),  (* value **next *)
     tptr int_or_ptr_type,  $\hspace{2.3em}$ (* value *p *)
     tint]       $\hspace{8.8em}$ (* int depth *)
  PROP (readable_share rsh; writable_share sh;
         graph_heap_compatible g h;
         outlier_compatible g outlier;
         forward_p_compatible forward_p outlier g from;
         forward_p_addr_match forward_p fwd_addr;
         forward_condition g h from to;
         0 <= depth <= Int.max_signed;
         from <> to)
  PARAMS (gen_start g from;       $\hspace{8.4em}$ (* from_start *)
           limit_address g h from;   $\hspace{7em}$ (* from_limit *)
           heap_next_address hp to;  $\hspace{6.2em}$ (* next *)
           forward_p_address forward_p fwd_addr g;  (* p *)
           Vint (Int.repr depth))   $\hspace{7.5em}$  (* depth *)
  SEP (outlier_rep outlier;
        forward_p_rep sh forward_p fwd_addr g;
        graph_rep g;
        heap_rep sh h hp)
POST [tvoid]
  EX g': LGraph, EX h': heap,
  PROP ((g', h') = forward_graph_and_heap from to (Z.to_nat depth)
                         (forward_p2forward_t forward_p g) g h)
  RETURN ()
  SEP (outlier_rep outlier;
        forward_p_rep sh (upd_fwd from to g forward_p) fwd_addr g';
        graph_rep g';
        heap_rep sh h' hp).
\end{lstlisting}
To support this claim that \lstinline{forward} is usable in
many different kinds of garbage collectors, we will show its
specification and proof without (yet) describing the specification
of other components of the g.c.

\autoref{list:forward-spec} shows the VST separation logic function
specification of the \lstinline{forward} function.  
It takes graph \lstinline{g} compatible with heap \lstinline{h} in the
precondition, and produces \lstinline{g'} compatible
with \lstinline{h'} in the postcondition.

Beyond the distinctions arising from exterior and interior
pointers---which lead to case-based definitions
of \lstinline{forward_p_type}, \lstinline{forward_p_compatible},
and \lstinline{forward_p_rep}---there are a few additional subtleties in
the specification.
\begin{itemize}
\item The from-space must be one \lstinline{space} in the \lstinline{heap} and the to-space must be a different one; that is
 we are forwarding from $\mathsf{heap}_\mathsf{from}$ to
 $\mathsf{heap}_\mathsf{to}$, where $\mathsf{from}\not= \mathsf{to}$.
\item The precondition \lstinline{forward_condition g h from to}
is defined as this conjunction:
\begin{lstlisting}
enough_space_to_copy g t_info from to
/\ graph_has_gen g from /\ graph_has_gen g to
/\ copy_compatible g /\ no_dangling_dst g
\end{lstlisting}
and the postcondition assures this of the new graph \lstinline{g'}.
``Enough space to copy'' means that the 
available part of the to-space is at least as large as the
unmarked part of the from-space. ``Graph has gen" means that
the generation numbers \lstinline{from} and \lstinline{to}
are legitimate in \lstinline{g}.  ``Copy compatible'' means
that every $\star$ marked vertex in the graph 
(as in \autoref{fig:cheney}) has a
forwarding pointer
\includegraphics{cheney-forwarding-pointer.pdf} installed
that points to a graph vertex in a different generation.
``No dangling destination" means that every edge in the graph
points to a legitimate vertex.
\item The parameter $p$ is the \emph{address} of a
location containing a pointer $q$ that's possibly into the from-space.
If $q$ doesn't point into the from-space, then it should be left
alone (as per the if-statement on line 6 of \autoref{list:forward}), because the space containing
$q$ is not currently being forwarded (or $q$ is an
outlier).
\item The postcondition establishes that
the new graph and heap \lstinline{(g',h')}
are calculated as \lstinline{forward_}\linebreak[1]\lstinline{graph_and_heap}
applied to \lstinline{(g,h)} and some other arguments.
This is essentially the \emph{functional model}
of the C program, defined by case analysis on the four cases:
unboxed, outlier, exterior (root) pointer, interior (edge) pointer.
Our verification employs a common pattern in proving correctness
of low-level imperative programs: prove that the C program
refines a functional model of an algorithm (a function or inductive relation in Rocq),
and then prove that the algorithm accomplishes the high-level
goal (graph isomorphism).  We defer the functional model
proof to \autoref{sec:gc-correct}.
\end{itemize}

\paragraph{Correctness proof(s) of the forward function}

In this paper we do not show Rocq proofs
such as the proof that \lstinline{forward} 
satisfies the funspec in \autoref{list:forward-spec}.
These proofs are available in the CertiGraph github
repo (see \autoref{appendix:repo}), and the proof scripts can be quite long.

The correctness proof of \lstinline{forward} is doubly
long. In effect, that funspec is really the disjunction of
two different specifications, because of the disjunction
in the definition of \lstinline{forward_p_type}.
We have two different proofs of the same C code,
depending on whether the \lstinline{p} parameter
is interior or exterior, and it would not be trivial
to refactor those into a single proof.   In \autoref{sec:ext-int} we
mentioned the disadvantage of treating roots as 
nonvertices---and needing two proofs of the
same C function is a significant disadvantage indeed.
But it is justified by the way it eases the proof
of every client of the g.c., as explained in
\autoref{sec:ext-int}.

The \lstinline{forward} function is called from three sites:
\begin{itemize}
\item From \lstinline{forward_roots} (as in \autoref{fig:cheney}B,C), where the argument \lstinline{p} is always exterior;
\item From \lstinline{do_scan}, (as in \autoref{fig:cheney}D,E),
where \lstinline{p} is always interior;
\item From \lstinline{forward_remset}, which handles the \emph{remembered set} of
mutable locations that have been stored into (as in \autoref{fig:update},
which shows the mutation but not the remembered set).
In this case (see \autoref{sec:forward-remset})
the stored-into location may be either
interior or exterior (i.e., an outlier).
The line of \lstinline{forward_remset} that calls
\lstinline{forward}
needs two different separation-logic proofs
to handle these two cases.
\end{itemize}
This concludes our discussion of the \lstinline{forward}
function; now let's consider other parts of the collector.
\subsection{Forwarding the roots}

The first step in a typical copying g.c.\ is to \emph{forward the roots},
as illustrated in \autoref{fig:cheney}B--C.  As illustrated
in \autoref{fig:frames1}, the roots are contained in a linked
list of records, where each record describes the bounds of an
array of root-pointers.  The \lstinline{forward_roots} function simply traverses
this data structure (\autoref{list:forward-roots}).

\begin{lstlisting}[language=C,float,caption={Forwarding the roots},label=list:forward-roots]
void forward_roots (value *from_start, $\hspace{4.1em}$  /* beginning of from-space */
                   value *from_limit,  $\hspace{4.1em}$  /* end of from-space */
                   value **next,       $\hspace{6.1em}$  /* next available spot in to-space */
                   struct stack_frame *frames) /* list of frames */
 { struct stack_frame *frame = frames;
   value *start; 
   size_t i, limit;
   while (frame != NULL) {
     start = frame->root;
     limit = frame->next - start;
     frame = frame->prev;
     for (i=0; i<limit; i++)
        forward(from_start, from_limit, next, start+i, DEPTH);
   }
}
\end{lstlisting}

The specification of \lstinline{forward_roots} (not shown) is 
similar to that of \lstinline{forward}, except that instead of the conjunct \lstinline{forward_p_rep} for a single root, 
the precondition
contains \lstinline{frames_rep sh fr} for the entire stack of root-frames.
The proposition \lstinline{forward_condition g h from to}
appears in the precondition, and
 \lstinline{forward_condition g' h' from to} is in the postcondition,
 meaning that \lstinline{forward_roots} preserves this important invariant.

\subsection{Scanning the to-space}

\begin{lstlisting}[language=C,numbers=left,numberstyle=\tiny,float,caption={Scanning the to-space},label=list:do-scan]
void do_scan(value *from_start,  /* beginning of from-space */
	     value *from_limit,  /* end of from-space */
	     value *scan,    $\hspace{2em}$    /* start of unforwarded part of to-space */
             value **next)   $\hspace{1.5em}$    /* next available spot in to-space */
{ while(scan < *next) {
    header_t hd = *((header_t*)scan);
    mlsize_t sz = Wosize_hd(hd);
    int tag = Tag_hd(hd);
    if (!No_scan(tag)) {
      int j;
      for(j = 1; j <= sz; j++)
        forward (from_start, from_limit, next, &Field(scan, j), DEPTH);
    }
    scan += 1+sz;
  }
}
\end{lstlisting}

In a typical copying g.c., after the roots are forwarded,
the next step is to scan the to-space (as shown in \autoref{fig:cheney}D--G).
The region between \lstinline{start} and \lstinline{scan} contains
pointers that have already been forwarded (none of them point
into from-space), and between \lstinline{scan} and \lstinline{next}
they have been copied verbatim, so may point into from-space.
The goal of scanning is to make sure all those pointers are forwarded
into to-space (or left alone, if they already pointed to other spaces).
\autoref{list:do-scan} implements this simple algorithm.

\paragraph{No-scan objects}
The OCaml data format has an 8-bit \emph{tag} field in each object's
descriptor, and tag values $\ge 251$ indicate that the fields of the
object do not contain pointers \emph{regardless of their low-order bits}.
This is useful for representing string values.  The test (at line 9)
for \lstinline{No_scan(tag)} skips the forwarding of the fields of
those objects.

\paragraph{Bug, found by verification}
The if-statement at line 9 of \autoref{list:do-scan} 
is to avoid scanning the fields of string objects.
It would be incorrect to scan those fields,
because even numbers would be misinterpreted as pointers.

\autoref{list:forward}, line 22 contains a similar if-statement
for the same reason.  However, in the original C code,
line 22 was missing---which was a bug---and the program was proved correct!
\cite{wang19:oopsla}
Furthermore, the bug was never detected in testing.
How is that possible?

The original correctness proof of the collector \cite{wang19:oopsla}
was done to a specification that was inadvertently too weak:
it permitted string objects only if every eighth byte was odd.
That is, every field of a no-scan object had to be tagged as an
unboxed value, which is not the intent of the no-scan tag
(and would render no-scan objects useless in practice).
The reason the bug was never detected is that no test case
simultaneously exercised \lstinline{depth>0} and no-scan objects.

It was not until we tried to verify a client program that used
packed bytestrings \cite{korkut25:popl} that the weakness of
the specification was detected:  our client program could not be proved
correct, because it used bytestrings that might contain even numbers
at any position.  When we strengthened the g.c.\ specification,
we found that the \lstinline{forward} function
could not be verified until we fixed the bug.

Garbage collectors are notorious for ``heisenbugs,"
that is, bugs that are very difficult to diagnose.  This would
have been one of those heisenbugs, except that we never had to diagnose
it by running a program.  Program verification did exactly the job
that it should have.  But it could only do so because we 
verified the \emph{adequacy of the specification} by 
proving a client program correct.

\paragraph{Funspec of \lstinline{do_scan}}

We will not show the function spec of \lstinline{do_scan},
which is mostly straightforward, except to remark:
\begin{itemize}
\item The parameter \lstinline{scan} must point to an address in the
to-space that holds a well-formatted record.
That is, there must be some vertex $v$ of the graph
whose \emph{label} is the pair $(\mathsf{to},i)$
where $\mathsf{scan}=\mathsf{start}(\mathsf{to})+i\cdot \mathsf{WORD\_SIZE}$.
\item The vertex $v$ must not be $\star$ marked.
\item A standard theorem about Cheney's algorithm is that
\lstinline{scan} will catch up with \lstinline{next}
before we run out of space.  Informally, that's because the
original graph did fit in the from-space, and the available
part of the to-space is at least as big as the from-space.
We express this formally with a predicate
\lstinline{forward_condition};
the proposition \lstinline{forward_condition g h from to}
appears in the precondition of \lstinline{do_scan}, and
 \lstinline{forward_condition g' h' from to} is in the postcondition,
 and it appears in every loop invariant of the proof,
 meaning that \lstinline{do_scan} preserves this important invariant.
\item The postcondition of \lstinline{do_scan} contains the proposition,
\lstinline{do_scan_relation from to $v$ g g'}.  Thus the VST proof
establishes that the C code of \lstinline{do_scan} implements the
functional model defined as a relation in Rocq
(\autoref{list:do-scan-model}).  The proof in
\autoref{sec:gc-correct} uses this relation as part of the
one-generation correctness argument, together with remembered-set
forwarding, root forwarding, and reset.
\end{itemize}

\begin{lstlisting}[float,caption={Functional model of \lstinline{do_scan}. 
 Compare with the C code in \autoref{list:do-scan} and note that it's
 a close transcription into Rocq of what the C program does.},label=list:do-scan-model]
Inductive scan_vertex_for_loop (from to: nat) (v: VType):
  list nat -> LGraph -> LGraph -> Prop :=
| svfl_nil: forall g, scan_vertex_for_loop from to v nil g g
| svfl_cons: forall g1 g2 g3 i il,
    forward_relation
      from to O (interior2forward (InteriorVertexPos v (Z.of_nat i)) g1) g1 g2 ->
    scan_vertex_for_loop from to v il g2 g3 ->
    scan_vertex_for_loop from to v (i :: il) g1 g3.

Definition no_scan (g: LGraph) (v: VType): Prop := 
                NO_SCAN_TAG <= (vlabel g v).(raw_tag).

Inductive scan_vertex_while_loop (from to: nat):
  list nat -> LGraph -> LGraph -> Prop :=
| svwl_nil: forall g, scan_vertex_while_loop from to nil g g
| svwl_no_scan: forall g1 g2 i il,
    gen_has_index g1 to i -> no_scan g1 (to, i) ->
    scan_vertex_while_loop from to il g1 g2 ->
    scan_vertex_while_loop from to (i :: il) g1 g2
| svwl_scan: forall g1 g2 g3 i il,
    gen_has_index g1 to i -> ~ no_scan g1 (to, i) ->
    scan_vertex_for_loop
      from to (to, i)
      (nat_inc_list (length (vlabel g1 (to, i)).(raw_fields))) g1 g2 ->
    scan_vertex_while_loop from to il g2 g3 ->
    scan_vertex_while_loop from to (i :: il) g1 g3.

Definition do_scan_relation (from to to_index: nat) (g1 g2: LGraph) : Prop :=
  exists n, scan_vertex_while_loop from to (seq to_index n) g1 g2 /\
            ~ gen_has_index g2 to (to_index + n).
\end{lstlisting}

\begin{lstlisting}[float,caption={Funspec of \lstinline{do_scan}},label=list:funspec-do-scan]
DECLARE _do_scan
WITH rsh: share, sh: share, gv: globals,  g: LGraph, h: heap, hp: val, 
     outlier: outlier_t, from: nat, to: nat, to_index: nat
PRE [tptr int_or_ptr_type,
     tptr int_or_ptr_type,
     tptr int_or_ptr_type,
     tptr (tptr int_or_ptr_type)]
  PROP (readable_share rsh; writable_share sh;
         graph_heap_compatible g h;
         outlier_compatible g outlier;
         forward_condition g h from to;
         from <> to; closure_has_index g to to_index;
         0 < available_size h to; gen_unmarked g to)
  PARAMS (gen_start g from;
           limit_address g h from;
           (vertex_address g (to, to_index)) + (- WORD_SIZE) 
           heap_next_address hp to)
  SEP (all_string_constants rsh gv;
        outlier_rep outlier;
        graph_rep g;
        heap_rep sh h hp)
POST [tvoid]
  EX g': LGraph, EX h': heap,
  PROP (graph_heap_compatible g' h';
         outlier_compatible g' outlier;
         forward_condition g' h' from to;
         do_scan_relation from to to_index g g';
         heap_relation h h')
  RETURN ()
  SEP (all_string_constants rsh gv;
        outlier_rep outlier;
        graph_rep g';
        heap_rep sh h' hp).
\end{lstlisting}

\section{Generational garbage collection}\label{sec:generational}

The programs and proofs in \autoref{sec:cheney} are modular,
in the sense that they can be repurposed without change in
a variety of different g.c.\ organizations.  In this section
we describe our specific multi-generation collector.

Assume, for the moment, that there are no assignments into mutable
references or arrays, that objects are \emph{immutable}.

We have up to $N_h$ \lstinline{space}s, called generations, numbered from generation 0, the \emph{nursery}.  Objects are created
and initialized in the nursery;
it is trivial that all the fields of nursery objects point to objects
in generations at least as old as the object itself.
That is: Objects in generation $g_i$ 
point only into generations $g_j$ for $j\ge i$).

A garbage collection entails: 
use Cheney's algorithm to copy all the objects from $g_0$
into the available space at the end of $g_1$
(at which point $g_0$ is empty);
and then either stop, or copy all the objects from $g_1$
into the available space of $g_2$, and so on.
Clearly this will preserve the invariant that
an object in generation $g_i$ will point only to objects
in generations $g_j$ such that $j \ge i$.

This is more efficient than Cheney's original
two-space scheme because:
\begin{itemize}
    \item Generation $i$ is smaller than generation $i+1$,
      so there are fewer objects to examine than if the 
      entire live graph is copied, and usually
      the collector will stop after the $i=0$ iteration;
    \item The \emph{infant mortality} hypothesis (more commonly
    known as the \emph{weak generational hypothesis}) is
    that young objects are likely to die sooner than objects
    that have stayed reachable for a long time already
    \cite{lieberman83}, so it's worth concentrating the
    effort on the youngest objects;
    \item The nursery can fit into the CPU cache,
    since it is all a contiguous region of about that size,
    so its allocation and nursery-collection are efficient \cite{goncalves95:fpca}.
\end{itemize}
To perform Cheney's algorithm on just a portion of memory,
i.e., generation $g_i$,
we must be able to find all the roots of $g_i$.
Because younger objects point only to older objects,
there will be no roots in generations $>i$;
and all generations $<i$ are empty when we are collecting
$g_i$; so the only roots of $g_i$ are the
local and global variables of the program---that is,
the \emph{stack of frames}.

Now let us remove the assumption that objects are immutable.
If field $f_k$ of an object in generation $g_i$ is stored into,
this may create a \emph{back pointer}, a pointer from an older
generation into a younger generation.  Upon assigning to $f_k$, we keep the
address of $f_k$ in a \emph{remembered set} associated
with $g_0$.  When collecting any generation $<i$, we
must treat $f_k$ as one of the roots of the Cheney
copying collection.  When collecting $g_j$ into $g_{j+1}$,
if the remembered-set item points within $g_j$ we promote it into
the remembered set of $g_{j+1}$.  If $f_k$ pointed into $g_{j'}$
for $j'>j$, then the copy of $f_k$ that's now in $g_{j+1}$ is no
longer a back-pointer, it is a ``normal" pointer to an object in
the same generation or older, so it no longer needs a remembered-set item.\footnote{The criteria for promoting a remembered-set item are simple but nontrivial enough
that we had a bug in the program until verification
revealed it.}

Let \lstinline{spaces[$~$]} be an array
of \lstinline{struct space} records (\autoref{list:struct-heap}).
We will keep the remembered set of  $g_i$ in the 
region between \lstinline{spaces[$i$].limit} and
\lstinline{spaces[$i$].rem_limit}; each word
in that region will be the address of a field that has
been stored into.  Thus, the operation \lstinline{mutable_update}
can be implemented as shown in \autoref{list:mutable_update}.
This works because \lstinline{ti->limit} is the boundary
between the nursery's available region and its remembered set.
If $q$ is an unboxed value (nonpointer) then it cannot
cause a back-pointer and does not need to be added to the
remembered set.

\begin{figure}    
        \caption{Each generation must have enough free space (between \lstinline{alloc} and \lstinline{limit}) for all the allocated+allocable
        objects in all younger generations.  To maintain that
        invariant, if it is violated for $g_k$, one can move
        (i.e., \lstinline{forward}) all objects from $g_0$ to $g_1$, then
        forward $g_1$ to $g_2$, etc., up to $g_k$ into $g_{k+1}$.
        The likelihood of reaching large $k$ is
        exponentially small, since each generation is twice the size of the preceding,
        and most objects do not survive this copying
        because they are not reachable.
        If $g_{k+1}$ doesn't exist, it is created from the
        system \lstinline{malloc}.  If that fails, the
        program must abort.}
        \label{fig:generations}
        \parbox{1.4in}{\fontsize{9}{10}\selectfont The \emph{remembered set} of generation $g_i$ contains
        pointers to fields of objects that \emph{might} contain
        pointers from older generations into $g_i$.  For example, the $g_0$ remembered set contains two entries: the address of        the $y$ field and of the $w$ field.  It is unnecessary but
        harmless to have $y$ in the $g_0$ remembered set.
        The purpose of $w$ in the remembered set is that during
        a collection of just generation $g_0$, this will serve
        as a root of Cheney's algorithm.
        }
    \hspace*{1em}\begin{minipage}{3.5in}
    \hspace*{1em}\includegraphics[scale=1]{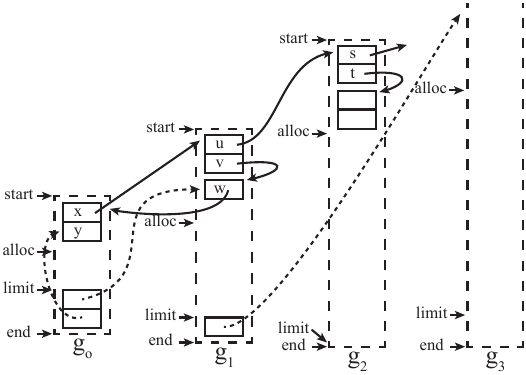}
    \end{minipage}
    \end{figure}
    
\begin{lstlisting}[language=C,float,caption={Storing into a mutable reference},label=list:mutable_update]
void mutable_update(struct thread_info *ti, value *p, value q) {
  *p = q;
  if (is_ptr(q))
    *(value **)(--ti->limit) = p;
}
\end{lstlisting}

The stored-into address \lstinline{p} may be the 
field of a graph object, or it may be an outlier.
The same C function \lstinline{mutable_update}
works in either case, but we need two different
specifications of it; this is a consequence of
the design decision discussed in \autoref{sec:ext-int},
that outliers are not graph nodes.  Storing into an outlier is one way that the mutator can register
a root-pointer that is not a local variable
in the stack of frames.\footnote{However, if
roots are registered in that way, there's no effective
way to \emph{unregister} a root. 
One can store an unboxed integer
into it, so that no graph nodes are retained,
but the address itself cannot be freed because
there will always be a remembered-set item pointing to it.
An alternate way for the mutator to register global
roots, such that the global address itself can be
freed when no longer needed, is to make special
frames at the \emph{bottom} of the stack of frames.
When a global root is no longer needed, its special
frame can be removed or cleared.  See 
\url{https://github.com/jpaykin/certicoq-set-library}}

\begin{lstlisting}[language=C,float,numbers=left,numberstyle=\tiny,caption={Forwarding the remembered
set from one generation to the next},label=list:forward_remset]
void forward_remset (struct space *from,  /* descriptor of from-space */
                     struct space *to,    /* descriptor of to-space */
                     value **next)        /* next available spot in to-space */
{ value *from_start = from->start, *from_limit=from->limit, 
        *from_rem_limit=from->rem_limit, *q;
  for (q=from_limit; q != from_rem_limit; q++) {
    value *p = (value*)int_or_ptr_to_ptr(*q);
    if (!Is_from(from_start, from_limit, p)) {
      forward(from_start, from_limit, next, p, DEPTH);
      *(--to->limit) = (value)p;
    }
  }
}
\end{lstlisting}

\subsection{Forwarding the remembered set}
\label{sec:forward-remset}

With a generational garbage collector, the assignment \lstinline{a->$\!$field=q}
could create a \emph{back pointer}, that is, a pointer from an older object
\lstinline{a} to a younger object \lstinline{q}.  If \lstinline{q} is in a
younger generation than \lstinline{a}, then when collecting (only) \lstinline{q}'s generation
it will be necessary to use \lstinline{a->$\!$field} as a root.
The way generational collectors handle this is to put address \lstinline{&a->$\!$field}
into a \emph{remembered set}.  There are many ways of representing the
remembered set as a data structure; Reppy's collector (and ours) uses
the space in each generation between \lstinline{limit} and \lstinline{rem_limit}.

\autoref{list:forward_remset} shows how to use a remembered set
as roots for forwarding, as well as the criteria for promoting
remembered-set items to the next generation's remembered set.  
Let \lstinline{p} be the address of \lstinline{&a->field};
in general, \lstinline{p} may be a mutable reference, a mutable
field of a record, an array element, or an outlier address.  To store
\lstinline{q} into address \lstinline{p}, the mutator calls
\lstinline{mutable_update(ti,p,q)}, passing also the thread-info pointer 
\lstinline{ti}.  This is illustrated in \autoref{fig:update}, which shows
that \lstinline{limit} is decreased, stealing one word from the allocation
arena to be used in the heap-management data structure.
The implementation of \lstinline{mutable_update} (\autoref{list:mutable_update})
shows what has happened: we consider the space (in every generation)
between \lstinline{limit} and \lstinline{rem_limit} to be part
of the heap-management data structure, and 
\lstinline{mutable_update}
simply stores the address p into the generation-0 part of that data structure
(increasing its size by decrementing \lstinline{limit}).
However, if \lstinline{q} is an unboxed integer, there is no
need to remember it because this store cannot possibly
create a back-pointer.

Later, when the garbage collector forwards
generation $i$ into generation $i+1$,
generation $i$'s remembered set must be processed,
by the function \lstinline{forward_remset}
(\autoref{list:forward_remset}).
The loop body in that function is rather simple
(just four short lines), but reasoning about its correctness is
remarkably complex.

\paragraph{Informal explanation of \lstinline{forward_remset}}
We can assume that \lstinline{from} is generation $i$,
\lstinline{to} is generation $i+1$, and all the generations
$0$ to $i-1$ are empty (because they would have to be collected before
the generational collector would consider forwarding generation $i$).
In \autoref{list:forward_remset}, lines 4--5 load the description
of from-space boundaries into local variables for convenient access.
The for-loop at line 6 processes each element of the remembered set.
We know all of those elements are pointers, not unboxed integers,
since they were the destination arguments of calls to \lstinline{mutable_update}.
Therefore line 7 can use \lstinline{int_or_ptr_to_ptr} to cast the \lstinline{value}
to type \lstinline{value*} (see \autoref{int-or-ptr}).
The pointer, a member of the remembered set, \lstinline{p} is the address of some field that has
been stored into.   
\begin{itemize}
\item If that field is in the current from-space (generation $i$), where could that field point to?  It must point to generation $\ge i$, since
all the younger generations are empty at present.  Therefore it cannot
be a back-pointer, and we can ignore it.
\item If that field is not in the current from-space, then it must be
in a generation $>i$ or be an outlier.
Either way, it's possible that it points into generation $i$,
in which case it would be a back-pointer.  We \lstinline{forward} the
pointer, which has no effect unless the field does indeed point into
generation $i$ (and is harmless otherwise).
On the other hand, perhaps that field points into generation $i+k$;
in that case, this call to \lstinline{forward} does nothing, but
we must preserve \lstinline{p} for later consideration
when collecting generations $>i$.  Line 10 does that by copying
\lstinline{p} into the remembered set of generation $i+1$.
\end{itemize}

\paragraph{Verification Complexity}
The simplicity of lines 4--7 in \lstinline{forward_remset} 
belie the complexity of verifying them.  Some of the issues that
complicate the reasoning are:
\begin{itemize}
    \item The test \lstinline{Is_from(from_start,from_limit,p)} is
    exactly the comparison \lstinline{from_start$\le$p<from_limit}.
    Our axiom for range comparisons (see \autoref{sec:rangetest}) requires us
    to prove that \lstinline{from_start} and \lstinline{from_limit}
    are allocated and in the same block, and that \lstinline{p} is
    allocated.  VST supports the ability to do such proofs, but
    it still requires several lines of Rocq proof script.

    \item Pointer arithmetic in the loop test condition (in the semantics
    of CompCert and Verifiable C) involves
    arithmetic modulo $2^\mathtt{word\_size}$, for which we do not have
    good proof automation.

    \item Our g.c. proof maintains \emph{many} invariants
    about the state of the graph and heap, and about the
    start, limit, and rem\_limit of each generation.  Reestablishing
    these invariants takes several lines of proof script.

    \item Recall from \autoref{sec:ext-int} that the specification
    of the \lstinline{forward} function is split into two cases,
    one for forwarding an interior pointer and one for an exterior pointer,
    and calls to \lstinline{forward} for these cases have nontrivially different
    proof obligations---different treatments of memory layout and separation-logic framing.  The call to \lstinline{forward} from 
    \lstinline{forward_remset} could be in either case; that is,
    \lstinline{p} could be an outlier (exterior) or in a generation
    (interior).  Essentially, we need two proofs of the same loop body,
    one for each such case.  
\end{itemize}

Consequently, we find that the four-line loop body requires
452 lines of Rocq/VST proof script.

\subsection{Proof of generational garbage collection}

\paragraph{Bug fixed in verification}
Our original implementation of \lstinline{forward_remset} had a bug
at line 13: there was \lstinline{q} instead of \lstinline{p}.
This was not found in testing, because we had not yet had test cases
with enough generations \emph{and} with the right pattern of mutable-ref updates.
Instead, we found it (and fixed it) while proving the correctness of the algorithm.

\paragraph{The \lstinline{do_generation} function}

With all of that established, we can present
the \lstinline{do_generation} function for copying the live objects from
$g_i$ to $g_{i+1}$, using both the remembered set
and the stack-of-frames as roots (\autoref{list:do-generation}).

\begin{lstlisting}[language=C,float,caption={Copying one generation into the available space of another},label=list:do-generation]
void do_generation (struct space *from,  /* descriptor of from-space */
                    struct space *to,    /* descriptor of to-space */
                    struct stack_frame *fr)  /* where are the roots? */
{ value *p = to->next;
  forward_remset(from, to, &to->next);
  forward_roots(from->start, from->limit, &to->next, fr);
  do_scan(from->start, from->limit, p, &to->next);
  from->next=from->start;
  from->limit=from->rem_limit;
}
\end{lstlisting}

\paragraph{The \lstinline{garbage_collect} function}

\begin{lstlisting}[language=C,float,numbers=left,numberstyle=\tiny,caption={The \lstinline{garbage_collect} function},label=list:garbage-collect]
void garbage_collect(struct thread_info *ti)   {
  struct heap *h = ti->heap;
  int i;
  h->spaces[0].limit = ti->limit;
  h->spaces[0].next = ti->alloc;
  for (i=0; i<MAX_SPACES-1; i++) {
      struct space *r = h->spaces+i;
      struct space *s = h->spaces+(i+1);
      if (s.start==NULL) {
        int n = 2 * r->rem_limit-r->start; 
        value *p = (value *)malloc(n * sizeof(value));
        if (p==NULL) abort_with ("Could not create the next generation\n");
        s->start=p;
        s->next=p;
        s->limit = p+n;
        s->rem_limit = s->limit;
      }
      do_generation(h->spaces+i, h->spaces+(i+1), ti->fp);
      if (r->rem_limit - r->start <= s->limit - s->next) {
         value *lo = h->spaces[0].start;
         value *hi = h->spaces[0].limit;
         if (hi-lo < ti->nalloc) 
           abort_with ("Nursery is too small for function's num_allocs\n");
         ti->alloc = lo;
         ti->limit = hi;
         return;
      }
    }
  abort_with("Ran out of generations\n");
}
\end{lstlisting}

\autoref{list:garbage-collect} embodies our policy for managing
the generations.
To match the size of
the a typical CPU cache,
we set \textsc{nursery\_size}=$2^{16}$ words or $2^{16\cdot W}$ bytes,
where $W=8$ is the word size in bytes.
We let generation $g_{i+1}$ be twice the size of generation $g_i$.
Since all the generations must fit in an address space
of $2^{W*8}$ bytes, one can calculate
the maximum number of generations that could ever be
usable:
\[
\mathrm{MAX\_SPACES} = (8W - (2+\log_2 (W\cdot\mathrm{NURSERY\_SIZE})))
\]
which is 43 when $W=8$.

For every $i$,
we want to guarantee the \emph{enough space to copy}
property between $g_i$ and $g_{i+1}$.
That means there must be enough available space in $g_{i+1}$
for the entire size of $g_i$.  If that not
the case immediately after a copying collection into the
available space of $g_i$, we perform Cheney's algorithm to copy
all the records out of $g_i$ into $g_{i+1}$.
If that (momentarily) violates 
\emph{enough space to copy}
between $g_{i+1}$ and $g_{i+2}$, then
we copy all records out of $g_{i+1}$, and so on.

If, in the process, we come to a generation $g_k$ that
has not yet been created (where $k<\mathrm{MAX\_SPACES}$),
then the g.c.\ creates that generation by calling
\lstinline{malloc}.  If \lstinline{malloc} returns 0
(failing to find enough virtual memory), the entire program (collector and
mutator) must abort.

We will never reach line 30 $(k\ge \mathrm{MAX\_SPACES})$,
because malloc cannot ever find more virtual memory than
the size of the 64-bit (or $8W$-bit) address space,
so instead it will abort at line 13. That's obvious,
but we have not actually proved it.  

If the mutator wants to create a very large object,
an array or string larger than the size of the nursery,
it will set \lstinline{ti->nalloc} to that large size
and call \lstinline{garbage_collect}, which will abort
at line 24.  A more sophisticated implementation would avoid aborting
by collecting all generations whose size is less than \lstinline{nalloc},
and then allocating the large object directly into one of the older
generations.

\paragraph{Correctness of \lstinline{garbage_collect}}
The VST specifications and proofs of the \lstinline{mutable_update},
\lstinline{do_generation}, and \lstinline{garbage_collect}
are expressed in similar manner to that of the funspecs shown earlier
in this \autoref{sec:cheney}.  Like those proofs, we prove that the
C programs implement functional models of algorithms expressed
as functions and inductive relations in Rocq.  Then we prove the
algorithms correct.

\section{Correctness of the functional model}\label{sec:gc-correct}

In Sections~\ref{sec:cheney} and~\ref{sec:generational}, the VST
proofs establish that the C code implements functional models written
in Rocq: for example, \lstinline{forward} implements
\lstinline{forward_graph_and_heap}, and \lstinline{do_scan} implements
\lstinline{do_scan_relation}.  This section shows how the functional models of all the C functions are proved to satisfy top-level specification:
that they preserve the graph structure visible to the mutator, up to isomorphism.

To establish this result, we first state the graph-isomorphism property
used as the mutator-facing conclusion.  We then prove it for
one-generation copying and present the remembered-set invariant needed
for mutable update.  Finally, we lift the one-generation result to the
full \lstinline{garbage_collect} loop.

In the theorem below, \lstinline{roots1} and \lstinline{roots2} are the
ordinary, mutator-visible root lists extracted from the stack of frames
described in \autoref{sec:interface}.  The additional roots used
internally during a one-generation collection are introduced in
\autoref{subsec:copying-one-generation}.

\subsection{Graph isomorphism for the root-reachable subgraph}
\label{subsec:root-reachable-isomorphism}

The predicate \lstinline{gc_graph_iso} formalizes the graph-isomorphism
property stated informally in \autoref{sec:spec}.  Root boxes
themselves are not vertices of the graph, as discussed in
\autoref{sec:ext-int}.  Therefore the definition first extracts from
each root list the roots that actually point to graph vertices,
restricts each graph to the subgraph reachable from those vertices,
and then requires an explicit label-preserving graph isomorphism
between the two reachable subgraphs.

\begin{lstlisting}
Definition gc_graph_iso (g1: LGraph) (roots1: roots_t)
                        (g2: LGraph) (roots2: roots_t): Prop :=
  let vertices1 := filter_proj exterior_proj_vertex roots1 in
  let vertices2 := filter_proj exterior_proj_vertex roots2 in
  let sub_g1 := reachable_sub_labeledgraph g1 vertices1 in
  let sub_g2 := reachable_sub_labeledgraph g2 vertices2 in
  exists vmap12 vmap21 emap12 emap21,
    roots2 = map (exterior_map vmap12) roots1 /\
    label_preserving_graph_isomorphism_explicit
         sub_g1 sub_g2 vmap12 vmap21 emap12 emap21.
\end{lstlisting}

The map on roots accounts for copying.  A root that points to a graph
vertex before collection may point to a different vertex afterwards,
namely the copy in another generation.  The
\lstinline{exterior_map} function maps graph-vertex roots through
\lstinline{vmap12} and leaves unboxed integers and outlier pointers
unchanged.

The main graph-isomorphism theorem has the following form:

\begin{lstlisting}
Theorem garbage_collect_isomorphism:
  forall roots1 roots2 g1 h1 rh1 rmst1 g2 h2 rh2 rmst2,
    graph_unmarked g1 ->
    no_unrecorded_backward_edge g1 rh1 ->
    no_dangling_dst g1 ->
    roots_graph_compatible roots1 g1 ->
    sound_gc_graph g1 ->
    remset_graph_state g1 O rmst1 rh1 ->
    remset_heap_covers_graph g1 rh1 ->
    garbage_collect_relation roots1 roots2 g1 h1 rh1 rmst1 g2 h2 rh2 rmst2 ->
    gc_graph_iso g1 roots1 g2 roots2.
\end{lstlisting}

Here \lstinline{g1} and \lstinline{g2} are the pre- and post-collection
graphs, and \lstinline{h1} and \lstinline{h2} are the corresponding abstract
heaps.  The variables \lstinline{rh1} and \lstinline{rh2} are the
remembered-set heaps, while \lstinline{rmst1} and \lstinline{rmst2} are the
remembered-set states used to relate those heaps to the graph.  These
components are needed to state the collector's internal invariants,
but they do not occur in the conclusion.  The conclusion mentions only
the two graphs and the two root lists.

The theorem also assumes standard graph well-formedness conditions.
In particular, \lstinline{sound_gc_}\linebreak[1]\lstinline{graph} says that the graph's
validity predicates agree with its represented vertex and edge sets,
and that each edge's source and label agree with the two components of
its representation.  More explicitly, \lstinline{graph_unmarked}
requires all graph vertices to be unmarked,
\lstinline{no_dangling_dst} requires every represented edge to have a
valid destination, and \lstinline{roots_graph_compatible} requires
every vertex-valued root to denote a valid graph vertex.  The
hypotheses \lstinline{no_unrecorded_}\linebreak[1]\lstinline{backward_edge},
\lstinline{remset_graph_state}, and
\lstinline{remset_heap_covers_graph} express the remembered-set
discipline discussed in
\autoref{subsec:remembered-sets-recorded-back-pointers}.  Separate
preservation lemmas show that the graph operations used by the
collector maintain these properties.

The hypotheses of \lstinline{garbage_collect_isomorphism} are
deliberately stated as separate graph-level facts rather than through
the bundled VST predicates used at the function boundary.  In the full
current funspec, bridge lemmas derive these facts from
\lstinline{full_gc} and \lstinline{remembered_set_ok}, together with
facts supplied by the spatial representation predicates.  The full VST
precondition is stronger: it also contains heap-layout and copying-safety
conditions needed to verify the C implementation but not used directly
by the graph-isomorphism argument.

For readability, the funspec displayed in
\autoref{list:gcspec} is abridged: it suppresses the remembered-set
component of the precondition and the heap and remembered-set arguments
of \lstinline{garbage_collect_}\linebreak[1]\lstinline{relation}.  The full
relation used by the correctness theorem threads the abstract heap,
remembered-set heap, and remembered-set state through the collection.
These additional arguments are needed for the collector's internal
invariants; the resulting \lstinline{gc_graph_iso} conclusion remains
the graph-isomorphism property described in \autoref{sec:spec}.

\subsection{Copying one generation}
\label{subsec:copying-one-generation}

The proof is first developed for one generation.  The C function
\lstinline{do_generation}, shown in \autoref{list:do-generation}, copies
live objects from generation \(g_i\) into the available space of
generation \(g_{i+1}\).  The functional-model relation follows the same
sequence as the C program:

\begin{center}
\lstinline{forward_remset};\hspace{1em}
\lstinline{forward_roots};\hspace{1em}
\lstinline{do_scan};\hspace{1em}
\lstinline{reset}.
\end{center}

The pointer \lstinline{p} saved at the beginning of
\autoref{list:do-generation} marks the start of the part of the
to-space that must later be scanned.  The calls to
\lstinline{forward_remset} and \lstinline{forward_roots} may copy objects
into that to-space and advance \lstinline{to->next}.  Then
\lstinline{do_scan} performs the Cheney closure by scanning the objects
copied between \texttt{p} and the final value of \lstinline{to->next}.
Finally, the from-generation is reset by restoring \lstinline{from->next}
and \lstinline{from->limit}, making the generation available for future
allocation.

We call the local and global variables represented by the stack of
frames described in \autoref{sec:interface} the ordinary roots.  In the
immutable case, one could explain the one-generation proof using only
these roots.  With mutable update, that is not enough.  As
\autoref{sec:generational} explains, a stored-into field in an older
object may point into the generation currently being collected.  That
older object is not itself in the from-space, and therefore it will not
be scanned during this collection.  The remembered set records the
address of such a field, and \lstinline{forward_remset} uses that
address as an additional root-like source for this collection step.

The proof makes this distinction explicit by considering an augmented
root list.  At the C level, \lstinline{forward_remset} processes
remembered-set locations.  At the graph level, the corresponding
additional roots are the target vertices currently stored at those
locations.  The augmented list is obtained by prefixing the
ordinary-root list with the targets of the entries that
\lstinline{forward_remset} processes for generation \(g_i\): entries
whose source locations are outside the current from-space and whose
current targets lie in \(g_i\).  Thus these remembered-set targets form
the prefix, while the ordinary stack-of-frames roots form the suffix.

The one-generation argument first establishes the graph-isomorphism
property for this augmented root list.  This corresponds to the
algorithmic role of \lstinline{forward_remset}: before the collector
forwards the ordinary roots and scans the to-space, it accounts for the
additional root-like sources represented by the remembered set.

The main technical step relates reachability from the augmented roots
in the original graph to the reachable-or-marked vertices after
remembered-set forwarding.  After the ordinary roots have been
forwarded and the to-space has been scanned, every vertex in generation
\(g_i\) reachable from the augmented roots has been forwarded and 
is marked, and no edge in the scanned graph points into \(g_i\).  
These two facts justify resetting the collected generation and yield
the graph-isomorphism property for the augmented root list.

The graph-isomorphism result for the augmented root list is only an
intermediate statement.  The one-generation result used by the outer
\lstinline{garbage_collect} loop is stated for the ordinary
stack-of-frames roots.  To obtain it, the proof restricts the
augmented-root isomorphism back to the ordinary roots.  This uses a
general property of \lstinline{gc_graph_iso}: an isomorphism between
root lists with corresponding equal-length prefixes can be restricted
to the suffixes consisting of the ordinary roots.  After rewriting the
post-collection root list so that the mapped remembered-set roots form
the corresponding prefix, these additional roots no longer appear in
the \lstinline{gc_graph_iso} conclusion used by the top-level collection
theorem.

This restriction does not undo forwarding or discard any object
reachable from the ordinary roots.  It changes only the root set from
which the two graphs are observed.  A target vertex that is reachable
from an ordinary root remains in the ordinary root-reachable subgraph,
even if it is also reached through a remembered-set location.  Other
target vertices may be reached only through the additional
remembered-set roots; they lie outside the restricted subgraph and
therefore do not affect this particular
\lstinline{gc_graph_iso} conclusion.

Thus, once the reachability bridge and the loop invariants have been
established---including graph soundness, root compatibility, the
remembered-set invariants, and the fact that all generations younger
than \(g_i\) are already empty---the functional-model relation for copying
generation \(g_i\) into \(g_{i+1}\) preserves
\lstinline{gc_graph_iso} for the ordinary roots.  The remembered set is
used internally to make the copying step complete, but it does not
appear in the resulting root-reachable graph-isomorphism conclusion.

\subsection{Remembered sets and recorded back pointers}
\label{subsec:remembered-sets-recorded-back-pointers}

The main proof invariant added for mutable update is a precise version
of the remembered-set discipline described in
\autoref{sec:generational}.  A back pointer into generation \(g_i\)
need not already be recorded in the remembered set of \(g_i\).  The
write barrier first records the source field in the remembered set of
generation \(g_0\), and the collector moves such remembered-set
entries into \(g_1\), then \(g_2\), etc., as it proceeds through the generations.  Thus the invariant
must describe not just where an entry ultimately matters, but where it
may be in the collector's current remembered-set pipeline.

In the graph model, pointer-bearing object fields are represented as
graph edges.  An edge \(e=(v,j)\) identifies the \(j\)-th field of
source vertex \(v\); its destination is the vertex currently stored in
that field.  Thus, \lstinline{remset_heap_records_edge rh k e} means
that the location of this source field occurs in the \(k\)-th component
of \lstinline{rh}.  The remembered set records the source location, not
the destination vertex:

\begin{lstlisting}
Definition remset_heap_records_edge (rh: remset_heap) (k: nat) (e: EType): Prop :=
  In (RemSetInterior (InteriorVertexPos (fst e) (Z.of_nat (snd e)))) (nth_remset_space rh k).
\end{lstlisting}

The indexed recorded-back-pointer invariant is:

\begin{lstlisting}
Definition no_unrecorded_backward_edge_from (from: nat) (g: LGraph) (rh: remset_heap): Prop :=
  forall e, graph_has_e g e ->
      egeneration e > vgeneration (dst g e) ->
      exists k, from $\le$ k $\le$ vgeneration (dst g e) /\ remset_heap_records_edge rh k e.

Definition no_unrecorded_backward_edge (g: LGraph) (rh: remset_heap): Prop :=
  no_unrecorded_backward_edge_from O g rh.
\end{lstlisting}

The comparison
\lstinline{egeneration e > vgeneration (dst g e)} selects a pointer
from an older generation to a younger one.  If the destination of the
edge is in generation \(g_i\), then
\lstinline{no_unrecorded_}\linebreak[1]\lstinline{backward_edge_from from g rh} says that the
source field of that edge is recorded in some remember\-ed-set component
\(g_k\), where \(\mathit{from} \leq k \leq i\).  At the public entry point of
\lstinline{garbage_collect}, the collection starts from \(g_0\), so
the unindexed name \lstinline{no_unrecorded_backward_edge} is the
special case with \(\mathit{from} = 0\).

This predicate concerns graph-internal edges, whose source locations
are represented by \lstinline{RemSet}\-\lstinline{Interior}.  Remembered-set
locations in outlier objects are not graph edges; they are covered
separately by the remembered-set compatibility conditions used at the
VST interface.

This condition replaces the simpler invariant available in the
immutable setting.  Without mutable update, once an object is created
in the nursery, its fields point only to objects in the same or older
generations, and older objects cannot later acquire pointers to
younger objects.  With \lstinline{mutable_update}, that statement is
false.  The right invariant is not that back pointers do not exist,
nor that their source fields are already recorded in the destination
generation's remembered set.  Instead, every graph-internal
old-to-young edge must be recorded somewhere in the remembered-set
pipeline between the generation currently being collected and the
edge's destination generation.

The recorded-back-pointer condition is kept separate from ordinary
graph/heap compatibility.  Ordinary compatibility predicates say that
the graph is represented in the heap, that roots and outliers are
compatible with the graph, and that edges do not dangle.  The
remembered-set condition talks about the heap-management data
structure, which is opaque to the mutator.  Keeping these conditions
separate makes the specification boundary clearer: remembered sets are
essential to the collector's internal correctness argument, but they
are not part of the graph structure that the mutator sees.

The augmented-root isomorphism described above is already stated for
the graph after generation \(g_i\) has been reset.  A separate
obligation, needed by the next iteration of the collector loop,
concerns resetting the corresponding component of the remembered-set
heap.  This reset must advance the indexed recorded-back-pointer
invariant: before collecting \(g_i\), pending entries may live in
remembered-set components \(g_k\) with \(i \leq k \leq d\), where
\(g_d\) is the destination generation of the edge; after \(g_i\) has
been collected and its remembered-set component reset, the next loop
iteration starts at \(g_{i+1}\), so pending entries must live in
components \(g_k\) with \(i+1 \leq k \leq d\).

Intuitively, every target vertex in \(g_i\) reached from a relevant
remembered-set location has already been forwarded before the graph is
reset.  Any remaining old-to-young edge into a later generation either
already had a witness in a remembered-set component above \(g_i\), or
had a witness in \(g_i\)'s remembered set and is promoted to the
remembered set of \(g_{i+1}\).  Edges whose destinations would be in
generations below \(g_{i+1}\) are ruled out by the invariant that
those generations have already been cleared, together with the
no-dangling-destination property.  Formally, under the graph and
remembered-set invariants maintained by the collector loop,
\lstinline{do_generation_relation} advances
\lstinline{no_unrecorded_backward_edge_from i} to
\lstinline{no_unrecorded_backward_edge_from (S i)} when the
remembered-set component for \(g_i\) is reset.

This preservation theorem is used for the next iteration of the
collector, not just for the current graph-isomorphism result.  The
one-generation isomorphism proof shows that the current collection
preserves what the mutator can observe through the ordinary roots.
The recorded-back-pointer preservation proof shows that the resulting
graph and remembered sets are a valid starting point for later
generation collections.

\subsection{The garbage-collection loop}
\label{subsec:garbage-collection-loop-correctness}

The top-level relation \lstinline{garbage_collect_relation} follows the
C loop shown in \autoref{list:garbage-collect}.  The collector starts
at generation \(g_0\).  At each iteration, it ensures that the next
generation exists, creating it with \lstinline{malloc} if necessary.  It
then calls \lstinline{do_generation} to copy the current generation into
the next one.  If the next generation has enough remaining space, the
collector restores the nursery allocation pointers in the
\lstinline{thread_info} structure and returns.  Otherwise, it proceeds
to the next generation.

The proof of \lstinline{garbage_collect_isomorphism} is by induction
over this functional loop.  The base case is reflexivity of
\lstinline{gc_graph_iso}.  Besides graph isomorphism, the induction
maintains that all generations younger than the one currently being
collected have already been cleared.  This auxiliary invariant holds
initially for generation \(g_0\), and each one-generation collection
advances it to the next generation.  In the inductive case, the proof
has three steps.

First, ensuring that the next generation exists is graph-isomorphism
neutral.  If it already exists, the graph and heap are unchanged;
otherwise, an empty graph generation and its corresponding unused heap
space are added.  In either case, the graph and remembered-set
invariants needed by later steps are preserved.

Second, the roots-isomorphism result for one-generation collection
gives the graph-isomorphism property for the ordinary roots after
\lstinline{forward_remset}, \lstinline{forward_roots},
\lstinline{do_scan}, and reset.  A separate reset-preservation result
advances the indexed recorded-back-pointer invariant from generation
\(g_i\) to generation \(g_{i+1}\), for the remembered-set heap passed
to the next iteration.

Third, if the collector loop continues, the induction hypothesis gives
graph isomorphism for the remaining generations.  Transitivity of
\lstinline{gc_graph_iso} then composes the isomorphisms from ensuring
that the next generation exists, collecting the current generation,
and executing the remaining iterations.

Finally, the theorem
\lstinline{garbage_collect_spec_}\linebreak[1]\lstinline{preconditions_imply_isomorphism}
connects the VST funspec to the graph-level theorem.  Unfolding
\lstinline{full_gc} and \lstinline{remembered_set_ok} provides the
compatibility, collector-entry, and remembered-set facts from which
most of the hypotheses of
\lstinline{garbage_collect_isomorphism} are derived.  The spatial
predicate \lstinline{remset_rep} entails \lstinline{remset_nodup},
while \lstinline{sound_gc_graph} is maintained as a separate semantic
invariant.  The VST body proof establishes
\lstinline{garbage_collect_relation}.  Together, these facts yield
\lstinline{gc_graph_iso} for the mutator's ordinary roots.

Consequently, the C collector may update its internal heap and
remembered sets while preserving, up to isomorphism, the object graph
reachable from the mutator's roots.  This is the mutator-facing
property stated in \autoref{sec:spec}.

\section{Future work}\label{sec:future}

\paragraph {Allocation of large objects}

When the mutator wants to allocate a very large array
(or immutable string object), if it is larger than the size
of the nursery then the call to \lstinline{garbage_collect}
can never satisfy the \lstinline{nalloc} requirement
(enough headroom in the nursery to allocate the object).
What the g.c. should do in this case is collect all generations
smaller than the requested \lstinline{nalloc},
and (temporarily) use generation $k$ as the nursery.
This would be just a few lines of code, but it is easy to imagine
getting it wrong; hence the utility of formal verification.

\paragraph{Safe points}
Multithreaded garbage-collected systems (including OCaml) need
a way to synchronize or suspend threads at \emph{safe points} in their
executions---that is, points where the thread's set of live roots
is understandable by the collector \citep[\S 9.6]{jones23}.  
Line 4 of \autoref{list:ensuring} is such a point: the
root pointers have all been stored into the stack-of-frames.
If a thread were to be interrupted at some other point, 
e.g.\ for the purpose of synchronizing the start of a 
concurrent garbage collection, there is no guarantee that
the g.c.\ could accurately find the roots in local variables.

The mechanism that ML compilers (including SML/NJ and OCaml) use,
when it's desired to stop some or all threads at safe points,
is to set the \lstinline{ti->limit} pointers of those threads
(in each of their \lstinline{thread_info} structs) to the
\lstinline{start} point of their respective allocation spaces.
That way, the next time the thread reaches an 
\lstinline{available} check, it will fail;
the root pointers will be stored into the stack-of-frames
(as in \autoref{list:ensuring}) and \lstinline{garbage_collect}
will be called.  At that point, \lstinline{garbage_collect}
must properly interpret the reason for its invocation:
either because the allocation space is truly exhausted,
or because an interrupt is requested.

For best results, the mutator should perform only a bounded
amount of computation between \lstinline{available} checks.

For correct results, when the mutator fetches
\lstinline{ti->limit}, it must do so by an \emph{atomic load},
i.e., a machine instruction that has the right semantics 
to interact with stores that come from another thread. 
Other mutator operations need not be atomic (because they operate on a per-thread nursery not shared with other threads), except
for the load or store of a mutable reference.

Parallel shared-memory clients
in the style of OCaml4 would be straightforward to implement and
mostly straightforward to verify, though it would require
using the features of VST for reasoning about concurrent memory
\cite{mansky24:iris}.

\paragraph{Verifying a high-performance parallel collector}
A more ambitious project would be to verify the OCaml5
collector \cite{ocaml5gc:icfp}.  One of the design goals of
that collector was to be compatible with the OCaml's legacy
mutator/collector interface.  So it would be interesting to see
which changes would be needed to the \emph{specification
interface} that we described in \autoref{sec:spec}.

\paragraph{Tracing in and out of external objects}
Our theory of external objects is sufficient for
(1) code blocks of closures, (2) abstract objects totally managed
by the runtime system, and (3) external root pointers (into the heap) managed by the runtime system.  But we have not demonstrated that one can reason effectively about traversable data structures that span between the g.c.-managed heap and the runtime-system-managed malloc/free arena.

The OCaml5 forward function is called \lstinline{oldify_one},
and it supports some features that we don't support:
\begin{itemize}
    \item OCaml (4 or 5) has special support for \emph{lazy} objects, 
    which have their own tag.
    \item OCaml objects can be nested within other objects; these are
    called \emph{Infix} objects with a special tag value, and the surrounding object must be found and forwarded.
    \item The queue of objects that have been copied and whose fields have not yet been forwarded, which Cheney's algorithm keeps in the (contiguous) to-space between \lstinline{scan} and \lstinline{next}, cannot be kept in
    that way, because the to-space is not organized that way.  Instead,
    a ``to-do list'' is linked through all the objects that have been
    copied and not yet scanned.  The to-do list is kept depth-first,
    but (unlike the system we described in \autoref{semi-depth-first}) this does not
    necessarily improve cache locality because the to-space is not
    a series of consecutive blocks.
    \newline
    \emph{The following items are specific to OCaml 5.}
    \item Reading the header is done by an acquire-mode atomic load,
    to synchronize in case some other shared-memory thread already started
    forwarding this object.
    \item To allocate the destination block, where our \lstinline{forward}
    function can simply add \linebreak $(1+n)\cdot \mathrm{WORD\_SIZE}$ to
    \lstinline{next}, OCaml5 must find a block of the right size
    in a local pool of free blocks within the shared major generation.
    \item When the destination block has been allocated, the function
    must check \emph{again} whether some other thread is forwarding this
    object (and if so, abandon the newly allocated block as garbage).
    \item Continuations (pointers to the stacks of
    ``fibers'' (lightweight threads within an operating-system
    thread) are tagged with \lstinline{Cont_tag}, 
    and must be handled specially.
    \item The ``to-do list'' is kept in a per-thread state
    structure, rather than in a global variable.    
\end{itemize}
Verifying the OCaml5 collector would be quite challenging
because of the sophisticated
techniques used to deal with concurrency.
\section{Conclusion}\label{conclusion}

We set out to verify a ``real-world'' garbage collector, in the
sense that it is a separate and portable C program with
a standard API usable by multiple client systems;
and where that API could be formalized into a specification
strong enough to prove client programs (mutators) correct.

The proof required significantly more effort than we expected,
perhaps for several reasons:
\begin{itemize}
\item 
As described by \citet{AppelN20}:
\begin{quote}
VST proofs are typically rather verbose (an
order of magnitude more lines of proof than lines of code),
but in this case the proof is particularly lengthy (compared
to the C code). We believe the reason for this is that a malloc/free program is particularly abusive of the C type system:
casting undifferentiated memory into linked data structures,
returning pointers that are one past the header word, making
sure that those pointers are double-word aligned, and so on.
All this abuse is legal in C11, but needs formal justification.
In contrast, the proofs of more “well behaved” C code are a
bit more automated by VST’s proof tactics.  
\end{quote}
The particularly high ratio of proof to code that they described in verifying a malloc/free memory manager also
applies to an automatic garbage collector, for some of the same
reasons.
\item Garbage collectors manipulate graphs
with sharing and (sometimes) cycles; to fit this into separation
logic (where separation is the antithesis of sharing) required
a graph-reasoning system such as CertiGraph.  Such reasoning
also increases the proof-to-code ratio. (Malloc/free systems
may represent graphs with sharing but the verification of malloc/free
itself does not require reasoning about those graphs.)
\item Adding \lstinline{mutable_update}---and in general the handling
of remembered sets in each generation---required changes to many
of the invariants that we had designed for the generational
collector without mutable update.
\item Handling \emph{outliers}, pointers from outside the heap into it,
and vice versa, significantly complicated our reasoning in some places.
Most other garbage-collector proofs prohibit outliers (or are not sound
for the semantics of C pointer comparison).  But many real-life
clients need them, for interaction with runtime systems,
so we chose to provide them.
\item Adjusting the specification of the collector to be usable
in correctness proofs of clients (that call the collector)
required careful refactoring.
\end{itemize}
Clients, too, need to reason about graphs with sharing.  An important
part of the (informal) specification of systems such as OCaml, Haskell, SML/NJ
is that they do not duplicate objects.  That is,
\lstinline{let y : int->int = f x in (y,y)} does not compile to
\lstinline{(f x, f x)} with two function closures,
there is a single function closure with two pointers to it.  So our
g.c. specification also provides the graph-reasoning theory
directly usable in client-compiler proofs.

We chose a multigeneration copying collector because its implementation is
simple and reasonably efficient.  But a more common modern arrangement
(not only in OCaml) is a hybrid two-generation collector,
with copying collection for the nursery and mark-sweep for the
older generation.  Parts of our collector (and proof) should be reusable
in such a scenario, but (as we have explained) there are some significant 
differences.

\begin{acks}
This work was funded in part by National Science Foundation grant CCF-2005545.
\end{acks}
\bibliographystyle{ACM-Reference-Format}
\bibliography{appel}
\clearpage


\appendix

\section{Open-source Rocq proof}\label{appendix:repo}
Our Rocq proofs are available open-source at
\url{https://github.com/CertiGraph/CertiGraph/releases/tag/v2.0}, in the \lstinline{CertiGC} 
directory of that repo, organized as follows:
\begin{description}
\item[GC\_Source/gc\_stack.\{h,c\}:] C source code (including Listings
    \ref{listing-value}, \ref{list:forward},
     \ref{list:forward-roots}, \ref{list:do-scan},
      \ref{list:mutable_update}, \ref{list:do-generation},
    \ref{int-or-ptr}).
\item[gc\_spec.v:] Function specifications of all the C functions
(Listings \ref{list:gcspec}, \ref{list:updatespec}, 
  \ref{list:funspec-do-scan}, \ref{specs-int-or-ptr}).
\item[verif\_forward.v, verif\_forward\{1,2\}.v:] Correctness proof of the \lstinline{forward} function.
\item[verif\_do\_scan.v:] Correctness proof of the \lstinline{do_scan} function.
\item[verif\_forward\_remset.v:] Correctness proof of \lstinline{forward_remset}.
\item[verif\_forward\_roots.v:] Correctness proof of \lstinline{forward_roots}.
\item[verif\_do\_generation.v:] Correctness proof of \lstinline{do_generation}.
\item[verif\_garbage\_collect.v:] Correctness of the client-facing function \lstinline{garbage_collect()}.
\item[gc\_correct.v:] Proof of correctness of the functional model (see \autoref{sec:gc-correct}).
\item[verif\_mutable\_update.v:] Proof of correctness of 
\lstinline{mutable_update}.
\end{description}

To build the garbage collector proof, run \lstinline{make certigc}.
To execute the Rocq \lstinline{Print Assumptions} command that verifies that there are no unexpected axioms or unproved lemmas, run\linebreak \lstinline{make CertiGC/assumptions.vo}.

The ``top-level specification'' or ``main theorem'' is, effectively, the \emph{specification contract} between the mutator and the g.c.  We described this (informally) in \autoref{sec:interface} and (formally) in \autoref{sec:spec}.  To see the complete specification, one would examine in particular,
\begin{itemize}
    \item In \lstinline{gc_spec.v}, the function specs \lstinline{garbage_collect_spec} and \lstinline{int_mutable_update_spec} (which allows the client
    to update a mutable reference that's in the g.c.-managed heap).
    \item In \lstinline{spatial_gc_graph.v}, the definitions of predicates referenced from those two funspecs.
    \item In \lstinline{gc_correct.v}, Theorem 
    \lstinline{garbage_collect_isomorphism} states that the functional model (which the C function \lstinline{garbage_collect} is proved to implement) correctly produces the appropriate isomorphic subgraph; Theorem
    \lstinline{mutable_update_gc_ready}
    states that mutable-ref-update produces the appropriate relation on graphs.
\end{itemize}
\section{The C semantics of tag bits}\label{tag-bit-semantics}

Most implementations of ML (e.g., OCaml, Standard ML of New Jersey) use the low-order-bit tagged 
representation described above:  even numbers are pointers, odd numbers are non-pointer data.
Both the mutator and the g.c.\ must test the low-order bit of a value to determine
how to handle it.  But in the official semantics of C,
this test is undefined behavior!  That is, taking a pointer value
and doing a bitwise-and (or by other means such as modulo) is not permitted
in portable C programs.  When we prove C programs correct, we use the VST program
logic that is proved sound w.r.t. the formal semantics of C, and it would
be impossible to prove such a program correct.

But such programs \emph{are} correct, in the sense that they work correctly in any standard
implementation of C (such as gcc, clang, CompCert).  So we axiomatize an extension
to the C standard that permits \emph{just enough} nonstandard (but commonly supported)
operations to implement a garbage collector.

\begin{lstlisting}[language=C,float,caption={Pointer/integer testing and conversion},label=int-or-ptr]
int test_int_or_ptr (value x) /* 1 if int, 0 if aligned ptr */ { return (int)(((int)x)&1); }

int int_or_ptr_to_int (value x) /* precondition: is int */ { return (int)x; }

void * int_or_ptr_to_ptr (value x) /* precond: is aligned ptr */ { return (void *)x; }

value int_to_int_or_ptr(int x) /* precondition: is odd */ { return (value)x; }

value ptr_to_int_or_ptr(void *x) /* precondition: is aligned */ { return (value)x; }
\end{lstlisting}

\begin{lstlisting}[float,caption={Specifications for pointer/integer testing and conversion},label=specs-int-or-ptr]
Definition test_int_or_ptr_spec :=
 DECLARE _test_int_or_ptr
 WITH x : val
 PRE [int_or_ptr_type]  PROP (valid_int_or_ptr x) PARAMS (x) SEP ()
 POST [ tint ]
   PROP() RETURN(Vint (Int.repr (match x with Vlong _ => 1 | _ => 0 end))) SEP().

Definition int_or_ptr_to_int_spec :=
  DECLARE _int_or_ptr_to_int
  WITH x : val
  PRE [int_or_ptr_type ]  PROP (is_int I32 Signed x) PARAMS (x) SEP ()
  POST [ tlong ]          PROP() RETURN (x) SEP().

Definition int_or_ptr_to_ptr_spec :=
  DECLARE _int_or_ptr_to_ptr
  WITH x : val
  PRE [int_or_ptr_type ]   PROP (isptr x) PARAMS (x) SEP ()
  POST [ tptr tvoid ]      PROP() RETURN (x) SEP().

Definition int_to_int_or_ptr_spec :=
  DECLARE _int_to_int_or_ptr
  WITH x : val
  PRE [ tlong ]              PROP (valid_int_or_ptr x) PARAMS (x) SEP ()
  POST [ int_or_ptr_type ]   PROP() RETURN (x) SEP().

Definition ptr_to_int_or_ptr_spec :=
  DECLARE _ptr_to_int_or_ptr
  WITH x : val
  PRE [tptr tvoid ]         PROP (valid_int_or_ptr x) PARAMS (x) SEP()
  POST [ int_or_ptr_type ]  PROP() RETURN (x) SEP().
\end{lstlisting}

Each of the functions in \autoref{int-or-ptr} has a simple implementation.
Their specifications in \autoref{specs-int-or-ptr}, in the VST format,
are straightforward.  For example, \lstinline{test_int_or_ptr_spec}
says that the \lstinline{test_int_or_ptr} function has no
spatial footprint (\lstinline{SEP()}), that the C function's sole parameter
has the C type which VST calls \lstinline{int_or_ptr_type}, and that the
CompCert value \lstinline{x} passed in that parameter
is a \lstinline{valid_int_or_ptr}---meaning that if it's a pointer, then its low-order bit
is 0, and if it's an integer, then 1.

Then the specification of \lstinline{int_or_ptr_to_int} has a precondition
that its input value $x$ is an integer value, not a pointer value.  
To satisfy that precondition, the C program could call
\lstinline{test_int_or_ptr} before calling either \lstinline{int_or_ptr_to_int}
or \lstinline{int_or_ptr_to_ptr}.

In the standard C semantics (provided by VST), it would be impossible to prove
that \lstinline{test_int_or_ptr} satisfies its specification.  But its
specs are \emph{consistent} with the C standard, i.e., they are a refinement
of the standard.  So we can safely axiomatize it, as a way of claiming
that the C compilers we care about (gcc, clang, CompCert) guarantee this extension
to the standard.

The C specification permits casting an integer to a pointer, provided that no
pointer operations are performed on it before it is cast back to an integer---and vice versa.
Thus, the C semantics does permit holding an integer in a variable of type \lstinline{void*}.
So the remaining functions can be proved correct using the C semantics,
though at present we have axiomatized them as well.

VST's separation logic assumes that a value of type \lstinline{void*} is
a pointer value, not an integer value.  To indicate to VST that our \lstinline{value}
type can hold either a pointer or an integer
(which VST calls \lstinline{int_or_ptr_type})
we mark the \lstinline{void*} type with a special attribute
(see \autoref{listing-value}).  The attribute we use
is does not alter the C semantics, but serves as a hint to VST about
how to treat variables and fields of this type.

\section{The C semantics of range testing}\label{sec:rangetest}
The \lstinline{forward} function (\autoref{list:forward}, line 6)
tests whether a pointer \lstinline{v} points within the from-space,
using the \lstinline{Is_from} function defined as follows:
\begin{lstlisting}[language=C]
int Is_from(value* from_start, value * from_limit,  value * v)
  { return (from_start <= v && v < from_limit); }
\end{lstlisting}
Pointer range testing is a standard operation in many garbage collectors,
but it is illegal in the standard C semantics!
The pointer comparisons $p <q$ or $p\le q$ in the official semantics of C require that $a$ and $b$ both be pointers into (or at the end of)
the same allocated block, otherwise the comparison is \emph{undefined
behavior}.  C requires this to avoid situations such as,
\begin{lstlisting}
    p = (char *)malloc(n);
    if (...) free(p);
    q = (char *)malloc(n);
    if (p<=q) {...}
\end{lstlisting}  
If \lstinline{free(p)} is executed, this will have different behavior depending on whether \lstinline{malloc}
re-uses the same block.  If \lstinline{free(p)} is not executed,
this tests whether \lstinline{malloc} chooses to place the new
block before or after the old block.  The designers of the C
semantics wanted to prevent (legal) programs from depending on
such implementation choices in \lstinline{malloc}.  Similar considerations
apply when \lstinline{p} or \lstinline{q} are pointers to 
stack-allocated local variables.

However, to test whether \lstinline{p} points within the from-space,
we need the range comparison \lstinline{from_start$\le$p<from_limit},
where it is guaranteed that \lstinline{from_start} and \lstinline{from_limit}
are in the same malloc'ed block and that \lstinline{p} points within
some malloc'ed block (i.e., generation).  This comparison is illegal
in the standard C semantics, but essentially all garbage collectors
must do it and it is assured to work properly in all C compilers---and it cannot inappropriately test the behavior of \lstinline{malloc}.
Therefore it is safe to extend the semantics of C to axiomatize
the behavior of this range comparison.

For the comparison \lstinline{$p$ <= $q$ && $q$ < $r$},
our axiom has the precondition 
that $p$ and $r$ are in (or point just past) the same \lstinline{malloc}'ed block,
and that $q$ points within (or just past) some allocated block.
We can express these conditions as predicates in the Verifiable C
separation logic.

\end{document}